\documentclass[11pt]{article}

\usepackage{amsthm}
\usepackage{amssymb}
\usepackage{amsmath}
\usepackage{amsfonts}
\usepackage{times}
\usepackage{fullpage}

\def\qed{\hfill $\vcenter{\hrule height .3mm
\hbox {\vrule width .3mm height 2.1mm \kern 2mm \vrule width .3mm
height 2.1mm} \hrule height .3mm}$ \bigskip}

\def \Sph{\mathbb{S}^{n-1}}
\def \RR {\mathbb R}

\def \EE {\mathbb E}

\def \PP {\mathbb P}
\def \eps {\varepsilon}

\def \VAR {\mathrm{Var}}

\newtheorem{theorem}{Theorem}[section]
\newtheorem{lemma}[theorem]{Lemma}

\newtheorem{proposition}[theorem]{Proposition}
\newtheorem{corollary}[theorem]{Corollary}
\theoremstyle{definition}

\theoremstyle{remark}

\long\def\symbolfootnotetext[#1]#2{\begingroup
\def\thefootnote{\fnsymbol{footnote}}\footnotetext[#1]{#2}\endgroup}

\begin{document}

\title{Efficient Algorithms for Discrepancy Minimization in Convex Sets}
\author{Ronen Eldan\thanks{Microsoft Research, Redmond. \tt{roneneldan@gmail.com}} \and Mohit Singh \thanks{Microsoft Research, Redmond. \tt{mohitsinghr@gmail.com}}}
\date{}

\maketitle

\abstract{
A result of Spencer~\cite{Spencer85} states that every collection of $n$ sets over a universe of size $n$ has a coloring of the ground set with $\{-1,+1\}$ of discrepancy $O(\sqrt{n})$. A geometric generalization of this result was given by Gluskin~\cite{Gluskin89} (see also Giannopoulos~\cite{Giannopoulos97}) who showed that every symmetric convex body $K\subseteq R^n$ with Gaussian measure at least $e^{-\epsilon n}$, for a small $\epsilon>0$, contains a point $y\in K$ where a constant fraction of coordinates of $y$ are in $\{-1,1\}$. This is often called a partial coloring result. While both these results were inherently non-algorithmic, recently Bansal~\cite{Bansal10} (see also Lovett-Meka~\cite{LovettM12}) gave a polynomial time algorithm for Spencer's setting and Rothvo\ss~\cite{Rothvoss14} gave a randomized polynomial time algorithm obtaining the same guarantee as the result of Gluskin and Giannopoulos.

This paper has several related results. First we prove another constructive version of the result of Gluskin and Giannopoulos via an optimization of a linear function. This implies a linear programming based algorithm for combinatorial discrepancy obtaining the same result as Spencer.

Our second result gives a new approach to obtains partial colorings and shows that every convex body $K\subseteq R^n$, possibly non-symmetric, with Gaussian measure at least $e^{-\epsilon n}$, for a small $\epsilon>0$, contains a point $y\in K$ where a constant fraction of coordinates of $y$ are in $\{-1,1\}$. 

Finally, we give a simple proof that shows that for any $\delta >0$ there exists a constant $c>0$ such that given a body $K$ with $\gamma_n(K)\geq \delta$, a uniformly random $x$ from $\{-1,1\}^n$ is in $cK$ with constant probability. This gives an algorithmic version of a special case of the result of Banaszczyk~\cite{Banaszczyk98}.\pagebreak

\section{Introduction}
Discrepancy problems appear in various areas of computer science and mathematics, we refer the reader to texts by Matou\v{s}ek~\cite{Matousek99} and Chazelle~\cite{Chazelle}. In the combinatorial discrepancy problem, we are given a universe $U=\{1,\ldots,n\}$ and sets $S_1,\ldots, S_m\subseteq U$ and the goal is to find a \emph{coloring} $\chi: U\rightarrow \{-1,+1\}$ that minimizes

$$\max_{j\in [m]} \left|\sum_{i\in S_j}  \chi(i)\right|.$$

A celebrated result of Spencer~\cite{Spencer85} states that there is a coloring with discrepancy $O(\sqrt{n})$ when $m=n$. There is a natural connection between discrepancy theory and convex geometry; Gluskin~\cite{Gluskin89} proved the same result as Spencer~\cite{Spencer85}, independently, using convex geometric arguments. Giannopoulos~\cite{Giannopoulos97}, building on the work of Gluskin~\cite{Gluskin89}, showed that the following generalization of Spencer's result: Given a symmetric convex body $K\subseteq \RR^n$ with Gaussian measure at least $e^{-\delta n}$, then for a small enough $\delta$, there exists $y\in K$ such that $\Omega(n)$ coordinates of $y$ are set to either $-1$ or $1$\footnote{While this result gives only a \emph{partial coloring}, by applying this result recursively, one can obtain the same result as Spencer's.}.

Interestingly, all these results were inherently non-algorithmic and obtaining polynomial time algorithms for the combinatorial discrepancy problem was highlighted as an open problem~\cite{Alon00}. Bansal~\cite{Bansal10}, in a breakthrough result, gave a polynomial time algorithm for the combinatorial discrepancy problem attaining the same discrepancy as the result of Spencer. Lovett and Meka~\cite{LovettM12} later gave a much simplified algorithm attaining the same guarantee. Both these algorithms inherently used the combinatorial structure of the problem and were not applicable to the general setting of finding a partial coloring in a convex body as given by the result of Giannopoulos~\cite{Giannopoulos97}. Recently, Rothvo\ss~\cite{Rothvoss14} gave a polynomial time algorithm that gives an algorithmic version of this result.

Another well-studied case of combinatorial discrepancy is to bound the discrepancy in terms of the maximum occurrence of any element among the $m$ sets. Beck and Fiala~\cite{BeckF81} showed that any set system has discrepancy $2t-1$ if each element appears in no more than $t$ sets and conjectured that the bound could be improved to $O(\sqrt{t})$. Techniques of Spencer~\cite{Spencer85}, and its algorithmic versions, can be adapted to bound the discrepancy by $O(\sqrt{t} \log n)$. Further improvement was obtained by Banaszczyk~\cite{Banaszczyk98} who showed a general result proving that given arbitrary unit vectors $u_1,\ldots, u_m\in \RR^n $ and convex body $K$ with $\gamma(K)\geq \frac12$ there exists signs $\epsilon_1,\ldots,\epsilon_m\in \{-1,1\}$  such that $\sum_{i} \epsilon_i u_i\in K$. This implies an improved bound of $O(\sqrt{t\log n})$ on the discrepancy of any set system where $t$ is the maximum occurrence of any element. 

\subsection{Our Results}

In this paper, we prove several algorithmic results for the discrepancy problem. A common feature of all our results is that we analyze the algorithms for the more general geometric formulations of our problem rather than the combinatorial version. These generalizations allow us to take advantage of many results in the theory of convex geometry.

Our first result shows that optimizing a random linear objective over the convex body results in a \emph{partial coloring}. Let $\gamma_n$ denote the $n$-dimensional standard Gaussian measure with density function $\frac{1}{{(2\pi)^{n/2}}}{e^{-\frac{{\|x\|^2}}{2}}}$.
\begin{theorem}\label{thm:main2}
 For any constant $0<\eps<\left(\frac{1-\sqrt{2/\pi}}{32}\right)^4$, there exists a constant $0<\delta<1$ such that every \emph{symmetric} convex body $K\subseteq \RR^n$ with $\gamma_n(K)\geq e^{-\eps n}$, the point $x=argmax\{\Gamma \cdot y: y\in K\cap [-1,1]^n\}$ where $\Gamma$ is a standard Gaussian in $\RR^n$,  satisfies  $\# \{i\in [n]:|x_i|=1 \}\geq \delta n$ with probability at least $\frac12$.
\end{theorem}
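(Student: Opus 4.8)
The plan is to argue by contradiction: suppose that with probability at least $\tfrac12$ the maximizer $x = \mathrm{argmax}\{\Gamma\cdot y : y \in K\cap[-1,1]^n\}$ has fewer than $\delta n$ tight coordinates. I would first reduce to a statement about the random linear functional $\Gamma$ seeing a large value on $K\cap[-1,1]^n$. The key quantity is $M(\Gamma) := \max\{\Gamma\cdot y : y\in K\cap[-1,1]^n\}$. On one hand, since $\gamma_n(K)\geq e^{-\eps n}$, a standard Gaussian concentration / Gaussian-measure argument (of the type underlying the $M^*$-estimate, or a direct union bound over a net) gives a lower bound on $\EE\, M(\Gamma)$, or more precisely shows that $M(\Gamma)$ is at least something like $c\sqrt{\eps}\, n$ with high probability: because $K$ has non-negligible Gaussian mass, it must ``reach out'' in most directions, so the support function of $K$ (truncated to the cube) is typically large. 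The dependence of the threshold $\eps < \big((1-\sqrt{2/\pi})/32\big)^4$ suggests the relevant bound scales like $\eps^{1/4}$ or $\sqrt\eps$ times $n$.

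Next, on the other hand, I would upper-bound $\Gamma\cdot x$ under the assumption that $x$ has at most $\delta n$ coordinates equal to $\pm 1$. Write $x = x_T + x_L$ where $T$ is the set of tight coordinates ($|x_i|=1$) and $L=[n]\setminus T$ the ``loose'' coordinates. On the loose coordinates the maximizer is an interior point of the box, so the first-order optimality condition forces $\Gamma_L$ to be (up to the normal cone of $K$) essentially orthogonal to the feasible directions there; more usefully, I would bound $\Gamma_T\cdot x_T \le \|\Gamma_T\|_1 \le \sqrt{|T|}\,\|\Gamma_T\|_2 \le \sqrt{\delta n}\cdot O(\sqrt n) = O(\sqrt\delta\, n)$, since $\|\Gamma_T\|_2 \lesssim \sqrt n$ with high probability for every subset of size $\le \delta n$ (a union bound over $\binom{n}{\delta n}$ subsets, using that $\delta$ is small so the binomial coefficient is $e^{o(n)}$, absorbed against Gaussian tail bounds). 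The loose part $\Gamma_L\cdot x_L$ is where symmetry of $K$ and the structure of the maximizer must be exploited: because $x$ maximizes a linear functional over a symmetric convex body intersected with the cube, and because on $L$ we are strictly inside the cube, the value $\Gamma_L \cdot x_L$ should be controlled by $M^*(K)$-type quantities, i.e.\ by $\EE \sup_{y\in K}\langle \Gamma, y\rangle$ restricted to coordinates in $L$; the point is that a symmetric body of Gaussian measure $e^{-\eps n}$ cannot be too ``wide'' — its mean width is $O(\sqrt{\eps n}\cdot\sqrt n)$ — giving $\Gamma_L\cdot x_L = O(\sqrt\eps\, n)$ as well. Combining, $\Gamma\cdot x = O((\sqrt\delta + \sqrt\eps)\, n)$, which contradicts the lower bound $M(\Gamma) \ge c\eps^{1/4} n$ once $\delta$ is chosen small enough relative to $\eps$ and $\eps$ is below the stated threshold.

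The main obstacle I anticipate is the second bound — controlling $\Gamma_L\cdot x_L$, the contribution of the non-tight coordinates — cleanly. The naive bound $\Gamma_L\cdot x_L \le \|\Gamma_L\|\cdot\|x_L\| \le \sqrt n\cdot\sqrt n = n$ is useless, so one genuinely needs that $x$ lies in $K$ and that $K$ is thin in the Gaussian sense. I would handle this by using that $x \in K$ together with a bound on $\sup_{y\in K}\langle\Gamma,y\rangle$ obtained from the Gaussian measure hypothesis via the inequality relating Gaussian measure and mean width (a small-ball / large-deviation estimate: a convex set of Gaussian measure $e^{-\eps n}$ has mean width $\lesssim \sqrt{\eps}\,n$ in suitable normalization), possibly combined with the symmetry to get a two-sided control and to relate the conditional behavior on coordinates in $L$ to the global width. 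A secondary technical point is making the two high-probability events (the lower bound on $M(\Gamma)$, and the uniform-over-subsets upper bound on $\|\Gamma_T\|_2$) simultaneously hold with probability $> \tfrac12$, which is just a matter of choosing constants and using that all the exceptional probabilities are exponentially small in $n$.
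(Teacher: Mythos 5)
Your proposal has the right outer shell (bound $M(\Gamma)=\sup_{y\in K\cap C}\Gamma\cdot y$ from above and below and get a contradiction when few coordinates are tight), but the two key quantitative claims are wrong, and in fact inverted relative to what the paper uses. First, the claim that ``a symmetric body of Gaussian measure $e^{-\eps n}$ cannot be too wide — its mean width is $O(\sqrt\eps\,n)$'' is false: $\gamma_n(K)\geq e^{-\eps n}$ is a \emph{lower} bound on measure (e.g.\ $K=\RR^n$ satisfies it), so there is no upper bound on the width of $K$ at all. The correct statement is Urysohn's inequality, which gives the \emph{opposite}: $\gamma_n(K)\geq e^{-\eps n}$ forces the mean width to be \emph{large}, namely $w(K)\geq(1-2\sqrt\eps)n$, i.e.\ nearly $n$. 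Consequently your proposed bound $\Gamma_L\cdot x_L=O(\sqrt\eps\,n)$ has no basis, and the lower bound $M(\Gamma)\gtrsim\sqrt\eps\,n$ (or $\eps^{1/4}n$) is far weaker than what actually holds. Even taking your bounds at face value, the contradiction does not close: the lower and upper bounds both carry a $\sqrt\eps\,n$ term of the same order, so no choice of $\delta$ produces a genuine conflict.

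The mechanism that actually works is the reverse of what you set up. The upper bound on $\Gamma\cdot x$ does \emph{not} come from $K$ being thin (it isn't); it comes from $x\in C=[-1,1]^n$, which gives the trivial but crucial bound $\EE[\Gamma\cdot x]\leq\EE\|\Gamma\|_1=\sqrt{2/\pi}\,n$, a width that is a definite constant factor below $n$. The lower bound comes from observing that if fewer than $\delta n$ coordinates of $x$ are tight, then the non-tight box constraints can be dropped without changing the optimum, so $\Gamma\cdot x=\Gamma\cdot s_{K(I)}(\Gamma)$ for some $I$ with $|I|<\delta n$, where $K(I)=K\cap\bigcap_{i\in I}\{x_i\in[-1,1]\}$. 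By \v{S}id\'ak's lemma $\gamma_n(K(I))\geq e^{-(\eps+\delta)n}$, hence by Urysohn $w(K(I))\geq(1-2\sqrt{\eps+\delta})n$, and Gaussian concentration plus a union bound over all $I$ with $|I|<\delta n$ shows this value concentrates near its mean. So on the event that few coordinates are tight, $\Gamma\cdot x\geq(1-32\eps^{1/4})n>\sqrt{2/\pi}\,n$ with overwhelming probability; combined with the expectation bound via Markov, this forces $\PP(a(\Gamma)\leq\delta)$ to be bounded away from $1$. You would need to flip your use of the measure--width relation (large measure $\Rightarrow$ large width, not small width), replace the role of $K$'s width with the width of the cube, and introduce the $K(I)$ construction with \v{S}id\'ak and a union bound, none of which appear in your sketch.
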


A corollary of the above result is the fact that {solving a series of linear programs} gives a coloring for the combinatorial discrepancy problem matching the result of Spencer~\cite{Spencer85}. The proof of this theorem adapts some of the ideas of Rothvo\ss~\cite{Rothvoss14} as well as the classical Uryshon's inequality.


Our next result gives a new approach that obtains a partial coloring without assuming symmetry of the convex body.
\begin{theorem}\label{thm:main1}
For any constant $\alpha\geq 0$, there exist constants $0<\eps, \delta<1$ such that every convex body $K\subseteq \RR^n$ with $\gamma_n(K)\geq e^{-\eps n}$ contains a point $x\in K$ with $\{i\in [n]:|x_i|=\alpha \}\geq \delta n$. Moreover, there is a polynomial time algorithm that given a membership oracle for $K$, returns such a point $x$ with high probability.
\end{theorem}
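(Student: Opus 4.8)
The plan is to reduce the statement to a geometric counting fact about faces of the cube and then run a randomized linear optimization in the spirit of Theorem~\ref{thm:main2}. Write $C_\alpha=\alpha[-1,1]^n$. Producing $x\in K$ with $\#\{i:|x_i|=\alpha\}\ge\delta n$ is the same as producing a point of $K\cap C_\alpha$ that saturates at least $\delta n$ of the box constraints $|y_i|\le\alpha$, i.e.\ a point of $K$ lying in a face of $C_\alpha$ of codimension $\ge\delta n$. We may assume $\alpha>0$ (the case $\alpha=0$ asks for a point of $K$ in a codimension-$\delta n$ coordinate subspace and will come out of the same argument as a degenerate limit). The point we will return is $x=\arg\max\{\langle\Gamma,y\rangle:\ y\in K\cap C_\alpha\}$ for $\Gamma$ a standard Gaussian, and the algorithm is simply to solve this convex program by the ellipsoid method using the membership oracle for $K$ (together with a preliminary call to locate a point of $K\cap C_\alpha$); the number of box constraints tight at the optimum is exactly the codimension of the face containing $x$, so everything is reduced to showing this codimension is $\Omega(n)$ with constant probability.

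The first step pins down which coordinates can be active at all and, in particular, that $K\cap C_\alpha\neq\emptyset$. Let $B(K)=\prod_i[\ell_i,u_i]$ be the smallest coordinate box containing $K$, with $\ell_i=\min_{y\in K}y_i$, $u_i=\max_{y\in K}y_i$, so that $\gamma_n(B(K))\ge\gamma_n(K)\ge e^{-\eps n}$. A one-dimensional computation shows that whenever the interval $[\ell_i,u_i]$ is disjoint from $\{-\alpha,\alpha\}$ — equivalently, when it is contained in $(-\infty,-\alpha)$, in $(-\alpha,\alpha)$, or in $(\alpha,\infty)$ — its Gaussian mass is at most $m_\alpha:=\max\{\gamma_1((-\alpha,\alpha)),\gamma_1((\alpha,\infty))\}<1$. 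Hence if more than $\delta_0 n$ coordinates were of this bad type we would get $\gamma_n(B(K))\le m_\alpha^{\delta_0 n}$, which contradicts the hypothesis as soon as $\eps<\delta_0\log(1/m_\alpha)$. Taking $\eps$ small we may therefore assume that all but $\delta_0 n$ coordinates are \emph{good}: the projection $\pi_i(K)$ is an interval meeting $\{-\alpha,\alpha\}$, so some point of $K$ has $i$-th coordinate $\pm\alpha$; in particular $K\cap C_\alpha\neq\emptyset$ and such a point is found by convex optimization.

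The remaining and decisive step is to turn this per-coordinate information into a \emph{single} point of $K$ saturating $\delta n$ box constraints at once, and this is where I expect the real work. I would follow the strategy of Theorem~\ref{thm:main2}: letting $F(\Gamma)$ be the number of tight box constraints at $x=\arg\max\{\langle\Gamma,y\rangle:y\in K\cap C_\alpha\}$, lower-bound $\EE F(\Gamma)$ by comparing $\EE\,\max_{y\in K\cap C_\alpha}\langle\Gamma,y\rangle$ with the value it could take if only few constraints were tight, the discrepancy being estimated through a mean-width / Urysohn-type inequality; the bound $\gamma_n(K)\ge e^{-\eps n}$ enters — as in the symmetric case — to guarantee that $K$, hence $K\cap C_\alpha$ after the previous step, has sufficiently large mean width. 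The new obstacle, absent in Theorem~\ref{thm:main2}, is the lack of symmetry: $K\cap C_\alpha$ may be far from centred, so the relevant width has to be measured as a half-width from a well-chosen base point $p\in K$ (for instance the Gaussian barycentre of $K$, whose norm is $O(\sqrt n)$ thanks to the measure bound), and one must check that this recentring costs only a constant factor in the final $\delta$ and is compatible with the good/bad dichotomy of the second step. Once $\EE F(\Gamma)\ge\delta n$ is established, a Markov/median argument gives a $\Gamma$ with $F(\Gamma)\ge\delta n$, and the ellipsoid method applied to the corresponding linear program returns the desired $x$ with high probability.
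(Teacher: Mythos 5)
The reduction in your first paragraph is wrong, and it is the paper's own stated reason for why this theorem is subtler than Theorem~\ref{thm:main2}: finding $x\in K$ with $\#\{i:|x_i|=\alpha\}\geq\delta n$ is \emph{not} equivalent to finding a point of $K\cap C_\alpha$ saturating $\delta n$ box constraints, because the theorem never asks for $x\in C_\alpha$, and in general no valid point lies in $C_\alpha$. The paper flags the example $K=\{x:x_1\geq 2\}$, which has $\gamma_n(K)\geq e^{-\eps n}$ for any $\eps>0$ and $n$ large yet satisfies $K\cap[-1,1]^n=\emptyset$; so for $\alpha=1$ your optimizer $\arg\max\{\langle\Gamma,y\rangle:y\in K\cap C_\alpha\}$ has an empty feasible region. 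Your good/bad-coordinate step does not repair this. In that example only coordinate $1$ is bad (its projection $[2,\infty)$ misses $\{-1,1\}$) while every other coordinate is good, and still $K\cap C_\alpha=\emptyset$: having $\pi_i(K)\cap\{-\alpha,\alpha\}\neq\emptyset$ separately for each good $i$ gives one witness per coordinate, not a common point, and gives no lower bound on $\gamma_n(K\cap C_\alpha)$ for any Urysohn-type width argument to act on. Recentring at a base point, as you suggest in the last paragraph, cannot help when the feasible set is empty, and the $\alpha=0$ ``degenerate limit'' collapses entirely since $C_0=\{0\}$.

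The paper's proof takes a genuinely different route that never intersects $K$ with a cube. It is a coordinate-by-coordinate recursion on slices $K\cap\{x_i=\pm\alpha\}$: Lemma~\ref{cortalagrand} shows the barycentre of $\gamma_K$ is short and the trace of its covariance is close to $n$; Lemma~\ref{lem:projection} (Pr\'ekopa--Leindler plus Caffarelli's one-dimensional contraction) shows every marginal of $\gamma_K$ has variance at most $1$; together these force the existence of a coordinate whose marginal has nearly zero mean and variance nearly $1$, and Lemma~\ref{logconcave} then lower-bounds that marginal's density at $\pm\alpha$. One fixes $x_i=\alpha\xi$, replaces $K$ by the slice (losing only a constant factor $\tau$ of measure), and recurses $\delta n$ times, estimating the needed covariances via Lov\'asz--Vempala sampling. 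The resulting point is in $K$ but typically not in $\alpha[-1,1]^n$, which is exactly what makes the argument correct for non-symmetric bodies. The single-LP strategy you outline is the one the paper reserves for the \emph{symmetric} Theorem~\ref{thm:main2}, where \v{S}id\'ak's lemma guarantees $K\cap[-1,1]^n$ retains a large measure; that guarantee is precisely what is unavailable here.
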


The algorithm uses the covariance matrix of the convex body and its restrictions. The main technical ingredient is to use the property that the measure $\gamma_K$, obtained by restricting $\gamma_n$ to the convex body $K$, is \emph{more log-concave} than the Gaussian measure.

 While Theorem~\ref{thm:main2} (and the results of Gluskin~\cite{Gluskin89}, Giannopoulos~\cite{Giannopoulos97} and Rothvo\ss~\cite{Rothvoss14}) guarantee the point $x\in K\cap [-1,1]^n$, Theorem~\ref{thm:main1} guarantees only that $x\in K$. This is necessary since the body $\{x\in \RR^n : x_1\geq 2\}$ satisfies the conditions of the theorem but does not intersect the hypercube $[-1,1]^n$. A consequence of this fact is that Theorem~\ref{thm:main1} cannot be used recursively to give an optimal coloring for the combinatorial discrepancy problem. Nonetheless, it shows that the technical condition of symmetry is not necessary if one aims to just find a partial coloring.

Our last result gives an algorithmic version of a special case of the result of Banaszczyk~\cite{Banaszczyk98} where $u_i=e_i$ for each $1\leq i\leq n$.

\begin{theorem}\label{thm:banaszcyzk}
For every $\delta>0$, there exists a constant $c\geq 0$ such the following holds. Let $K\subseteq \RR^n$ be a convex and symmetric body such that $\gamma(K)\geq \delta$ and let $x$ be a uniformly random vector from $\{-1,1\}^n$. Then
$$Pr[x\in cK]\geq \frac12.$$
\end{theorem}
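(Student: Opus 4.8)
The plan is to reduce the theorem to a dimension-free upper bound on the expected gauge of $K$ under the Rademacher measure, and to obtain that bound by transferring the hypothesis $\gamma_n(K)\geq\delta$ through Gaussian concentration and the classical comparison between Rademacher and Gaussian averages. Write $\|x\|_K=\inf\{t>0:x\in tK\}$ for the Minkowski gauge of $K$; since $K$ is closed, convex and symmetric, $\|x\|_K=\sup_{y\in K^\circ}\langle x,y\rangle$, where $K^\circ$ is the polar body. Because $\{x\in cK\}=\{\|x\|_K\leq c\}$, Markov's inequality shows that it suffices to establish $\mathbb E_x\|x\|_K\leq C(\delta)<\infty$ for $x$ uniform on $\{-1,1\}^n$; then $c:=2C(\delta)$ works.

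The crucial quantitative step is that $\gamma_n(K)\geq\delta$ forces the inradius $\rho:=\sup\{r\geq0:rB_2^n\subseteq K\}$ to be bounded below by a constant depending only on $\delta$. Assume $\rho<\infty$ (otherwise $K=\RR^n$ and the claim is trivial); one checks that $\gamma_n(K)>0$ implies $\|\cdot\|_K$ is finite everywhere, so the continuous function $\|\cdot\|_K$ attains its maximum $1/\rho$ on the unit sphere at some unit vector $v^*$, i.e.\ $\rho v^*\in\partial K$. A supporting hyperplane of $K$ at $\rho v^*$ together with the symmetry of $K$ and the fact that $\rho B_2^n\subseteq K$ pins the hyperplane to be orthogonal to $v^*$, giving $K\subseteq\{x:|\langle x,v^*\rangle|\leq\rho\}$. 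Hence $\delta\leq\gamma_n(K)$ is at most the one-dimensional Gaussian mass of $[-\rho,\rho]$, which is at most $\rho\sqrt{2/\pi}$, so $\rho\geq\delta\sqrt{\pi/2}$. Equivalently, $\sigma:=\sup_{y\in K^\circ}\|y\|_2=1/\rho\leq\delta^{-1}\sqrt{2/\pi}$.

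Now apply Gaussian concentration to $Z:=\|g\|_K=\sup_{y\in K^\circ}\langle g,y\rangle$, which is $\sigma$-Lipschitz as a function of the standard Gaussian $g$. From $\Pr[Z\leq1]=\gamma_n(K)\geq\delta$ and $\Pr[Z\leq\mathbb E Z-t]\leq e^{-t^2/2\sigma^2}$ we get $\mathbb E_g\|g\|_K\leq 1+\sigma\sqrt{2\ln(1/\delta)}=:M(\delta)$. Finally, writing $g_i=\eps_i|g_i|$ with $\eps$ a Rademacher vector independent of $(|g_i|)$ and applying Jensen's inequality to the convex map $h\mapsto\sup_{y\in K^\circ}\langle h,y\rangle$, one obtains the standard comparison $\mathbb E_x\|x\|_K\leq\sqrt{\pi/2}\,\mathbb E_g\|g\|_K\leq\sqrt{\pi/2}\,M(\delta)$. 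Markov's inequality then yields $\Pr[x\in cK]\geq\frac12$ with $c:=\sqrt{2\pi}\,M(\delta)$.

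The only non-routine step is the inradius lower bound of the second paragraph: it is exactly what converts the bare positivity of $\gamma_n(K)$ into a $\delta$-dependent, dimension-free Lipschitz constant for the Gaussian process, and everything downstream is classical. The remaining care is in the degenerate cases — one must rule out $K$ being lower-dimensional (impossible since $\gamma_n(K)>0$) and handle unbounded $K$ (where $K^\circ$ need not be full-dimensional) by noting that $\gamma_n(K)>0$ already forces $\mathrm{cone}(K)=\RR^n$, so $\|\cdot\|_K$ is finite and continuous throughout and the argument goes through verbatim.
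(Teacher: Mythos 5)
Your proof is correct and takes essentially the same route as the paper's: lower-bound the inradius of $K$ by a $\delta$-dependent constant (making the gauge $\Vert\cdot\Vert_K$ Lipschitz), use Gaussian concentration to bound a moment of $\Vert\Gamma\Vert_K$, compare Rademacher and Gaussian averages, and finish by Markov. The only cosmetic difference is that you work with first moments and derive the Rademacher--Gaussian comparison directly from Jensen's inequality, while the paper invokes the Maurey--Pisier inequality for second moments; your inradius argument via the supporting hyperplane at the contact point $\rho v^*$ is also spelled out a bit more explicitly than the paper's.
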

%
%
%
%

 The structure of the rest of the paper is as follows. We prove Theorem~\ref{thm:main2} in Section~\ref{sec:lp}, Theorem~\ref{thm:main1} in Section~\ref{sec:main1} and Theorem~\ref{thm:banaszcyzk} in Section~\ref{sec:banas}.
 
\section{A linear programming algorithm}\label{sec:lp}

Let $K \subset \RR^n$ be a convex body and let $\Gamma = (\Gamma_1,...,\Gamma_n)$ be a standard Gaussian random vector in $\RR^n$. For $0 \neq y \in \RR^n$, set
$$
s_K(y) = \arg \max_{x \in K} \langle x, y \rangle,
$$
the supporting point of $y$ in $K$ (here, we agree that if there is more than one argument which maximizes the expression, for the purpose of analysis, we take the point closest to the origin which is unique by convexity). Note that given $K$ and $y$, the point $s_K(y)$ can be found by optimizing a linear function over $K$ which is a linear program when $K$ is a polytope.
Next, define $C = [-1, 1]^n,$ and for any $0 \neq y \in \RR^n$, we also define
$$
a(y) = \frac{1}{n}  \left [ \# \bigl \{ i; s_{K \cap C}(y)_i \in \{-1, 1 \}  \bigr \} \right ].
$$
In other words, $a(y)$ denotes the proportion of coordinates which are set to $-1$ or $+1$ in the point $s_{K \cap C}(y)$. In this notation, the proof of Theorem \ref{thm:main2} boils down to showing that for all $\eps$ small enough, there exists $\delta > 0$ such that
$$
\gamma_n(K) > e^{- \eps n} \Rightarrow \PP(a(\Gamma) \geq \delta) > c
$$
for a universal constant $c>0$.

A central definition in our proof will be the Gaussian \emph{mean-width} of a convex body, defined by
$$
w(K) := \EE[\Gamma \cdot s_K(\Gamma) ] = \EE \left [ \max_{x \in K} \langle \Gamma, x \rangle \right ].
$$

The proof, which shares some ideas with the recent proof of Rothvo\ss \cite{Rothvoss14}, relies on three classical results as its main ingredients. The first ingredient is \v{S}id\'ak's Lemma~\cite{Sidak67}:

\begin{lemma} (\v{S}id\'ak) \label{lem:sidak}
Let $K$ be a symmetric convex body and $S=\{x: |v_j\cdot x|\leq b_j\}$ be a \emph{strip}. Then $\gamma(K\cap S)\geq \gamma(K)\gamma(S)$.
\end{lemma}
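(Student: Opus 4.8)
The plan is to follow \v{S}id\'ak's original argument, whose two analytic inputs are the log-concavity of the Gaussian measure and a one-dimensional correlation inequality for monotone functions. First I would use the rotation invariance of $\gamma_n$ — together with the fact that rotations preserve convexity and central symmetry — to reduce to the case in which the strip is $S = \{x \in \RR^n : |x_1| \le b\}$ (rescale $b_j$ if $v_j$ is not a unit vector; the degenerate cases $v_j=0$ or $b_j\le 0$ are immediate). Writing $x = (t,u)$ with $t \in \RR$, $u \in \RR^{n-1}$, set $K_t = \{u \in \RR^{n-1} : (t,u) \in K\}$ and let $\phi$ be the standard Gaussian density on $\RR$. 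By Fubini's theorem, $\gamma_n(K) = \int_{\RR} g(t)\,\phi(t)\,dt$ and $\gamma_n(K \cap S) = \int_{-b}^{b} g(t)\,\phi(t)\,dt$, where $g(t) := \gamma_{n-1}(K_t)$ is the standard Gaussian measure on $\RR^{n-1}$ of the slice $K_t$; also $\gamma_n(S) = \int_{-b}^{b}\phi(t)\,dt$. (Measurability of $g$ is clear since each $K_t$ is convex, and $K_t=\emptyset$ for large $|t|$ causes no harm.)

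The crux is the claim that $g$ is even in $t$ and non-increasing in $|t|$. Evenness follows from $K_t = -K_{-t}$ (central symmetry of $K$) together with the invariance of $\gamma_{n-1}$ under $u \mapsto -u$. For monotonicity, fix $0 \le s < t$ and set $\lambda = \tfrac{s+t}{2t} \in [\tfrac12,1)$: if $u, w \in K_t$ then $(t,u) \in K$ and, by symmetry, $(-t,-w) \in K$, so $\lambda(t,u) + (1-\lambda)(-t,-w) = (s,\ \lambda u - (1-\lambda)w) \in K$; letting $u,w$ range over $K_t$ shows $\lambda K_t + (1-\lambda)(-K_t) \subseteq K_s$. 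Since the standard Gaussian measure is log-concave, $\gamma_{n-1}(\lambda A + (1-\lambda)B) \ge \gamma_{n-1}(A)^{\lambda}\gamma_{n-1}(B)^{1-\lambda}$ for all Borel $A, B$; applying this with $A = K_t$, $B = -K_t$ and using $\gamma_{n-1}(-K_t) = \gamma_{n-1}(K_t)$ gives $g(s) \ge g(t)$.

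It then remains to assemble the pieces. Write $g(t) = G(|t|)$ with $G$ non-increasing on $[0,\infty)$, and $\mathbf{1}[\,|t| \le b\,] = H(|t|)$ with $H$ non-increasing. For $X$ a standard Gaussian on $\RR^n$ and $Y := |X_1|$, the one-variable correlation (Chebyshev/Harris) inequality — valid because $\big(G(Y)-G(Y')\big)\big(H(Y)-H(Y')\big)\ge 0$ for an independent copy $Y'$, so that $\tfrac12\EE\big[(G(Y)-G(Y'))(H(Y)-H(Y'))\big] = \EE[G(Y)H(Y)] - \EE[G(Y)]\EE[H(Y)] \ge 0$ — yields $\EE[G(Y)H(Y)] \ge \EE[G(Y)]\,\EE[H(Y)]$. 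By the identities of the first paragraph the left-hand side equals $\gamma_n(K\cap S)$ and the right-hand side equals $\gamma_n(K)\,\gamma_n(S)$, which is the assertion.

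The only genuinely nontrivial step is the monotonicity of $g$; everything else is routine bookkeeping (verifying the slicing identities, the hypotheses of the correlation inequality, and the degenerate cases). An alternative route to that step is Anderson's lemma applied to the Gaussian density restricted to a central slice, but the Minkowski-inclusion argument above seems shortest.
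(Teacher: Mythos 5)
Your proof is correct. Note that the paper does not prove this lemma at all: it is invoked as a classical ingredient with a citation to \v{S}id\'ak's original work, so there is nothing internal to compare against. Your argument is the standard one for the single-strip case --- rotate the strip to $\{|x_1|\le b\}$, show the Gaussian measure $g(t)$ of the slice $K_t$ is even and non-increasing in $|t|$ via central symmetry together with the log-concavity of $\gamma_{n-1}$ (the Minkowski-combination inclusion $\lambda K_t+(1-\lambda)(-K_t)\subseteq K_s$ is exactly the right device), and finish with the Chebyshev correlation inequality for two non-increasing functions of $|X_1|$ --- and all the steps check out, including the handling of empty slices and the degenerate strips.
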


The second ingredient is Sudakov-Tsirelson and Borell's well known Gaussian concentration result \cite{Borell75-2}:
\begin{theorem} \label{Gaussconc}
Let $f:\RR^n \to \RR$ be an $L$-Lipschitz function. Then one has for all $t>0$,
$$
\PP \left ( \left | f(\Gamma) - \EE[f(\Gamma)] \right  | > L t\right  ) < 2 e^{-t^2 / 2}.
$$
\end{theorem}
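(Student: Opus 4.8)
The plan is to reduce to the smooth, unit‑Lipschitz case and then establish the sharp sub‑Gaussian estimate on the moment generating function, from which both tails follow by the Chernoff bound. First I would replace $f$ by $f/L$ to assume $L=1$, and convolve $f$ with a narrow Gaussian (letting its width tend to $0$) to assume $f$ is smooth: mollification preserves the Lipschitz constant, converges to $f$ uniformly, and makes $\EE[f(\Gamma)]$ converge, so the tail bound for smooth $f$ passes to the limit. Note $|f(x)|\le|f(0)|+\|x\|$, so $\EE[e^{\lambda f(\Gamma)}]<\infty$ for every $\lambda$. The goal is then
$$
\EE\!\left[e^{\lambda\,(f(\Gamma)-\EE f(\Gamma))}\right]\ \le\ e^{\lambda^2/2}\qquad\text{for all }\lambda\in\RR,
$$
after which $\PP(f(\Gamma)-\EE f(\Gamma)>t)\le\inf_{\lambda>0}e^{-\lambda t+\lambda^2/2}=e^{-t^2/2}$, the same estimate holds for $\PP(f(\Gamma)-\EE f(\Gamma)<-t)$ by applying the previous line to $-f$, and adding the two disjoint tail events gives $\PP(|f(\Gamma)-\EE f(\Gamma)|>t)\le 2e^{-t^2/2}$.

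To prove the moment bound I would run the Ornstein--Uhlenbeck semigroup argument. Let $P_s h(x)=\EE\bigl[h(e^{-s}x+\sqrt{1-e^{-2s}}\,\Gamma)\bigr]$, with generator $\mathcal{L}h=\Delta h-\langle x,\nabla h\rangle$, so that $\partial_s P_sh=\mathcal{L}P_sh$, $P_0h=h$ and $P_\infty h\equiv\EE[h(\Gamma)]$. I would use two standard facts: the Gaussian integration‑by‑parts identity $\EE[(\mathcal{L}g)\,h]=-\EE[\langle\nabla g,\nabla h\rangle]$ (integrate each $\partial_i^2$ against the Gaussian density and sum), and the commutation relation $\nabla P_s h=e^{-s}P_s(\nabla h)$ (differentiate under the integral in the definition of $P_s$), which with Jensen's inequality gives $\|\nabla P_sf(x)\|\le e^{-s}P_s(\|\nabla f\|)(x)\le e^{-s}$ because $f$ is $1$‑Lipschitz. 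Assuming w.l.o.g.\ $\EE[f(\Gamma)]=0$ and fixing $\lambda$, set $F(s)=\EE[e^{\lambda P_sf(\Gamma)}]$, so $F(0)=\EE[e^{\lambda f(\Gamma)}]$ and $F(\infty)=1$. Differentiating and applying integration by parts with $g=P_sf$, $h=e^{\lambda P_sf}$,
$$
F'(s)=\lambda\,\EE\!\left[(\mathcal{L}P_sf)\,e^{\lambda P_sf}\right]=-\lambda^2\,\EE\!\left[\|\nabla P_sf\|^2\,e^{\lambda P_sf}\right]\ \ge\ -\lambda^2 e^{-2s}\,F(s),
$$
so $\tfrac{d}{ds}\log F(s)\ge-\lambda^2 e^{-2s}$; integrating over $s\in[0,\infty)$ and using $F(\infty)=1$ gives $-\log F(0)\ge-\lambda^2\int_0^\infty e^{-2s}\,ds=-\lambda^2/2$, i.e.\ $F(0)\le e^{\lambda^2/2}$, as desired.

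\textbf{Main obstacle.} The algebra above is short; essentially all the care goes into the analytic bookkeeping — differentiating $F$ under the expectation (uniform integrability from the Gaussian exponential moments), the vanishing of boundary terms as $s\to\infty$ (equivalently $F(s)\to e^{\lambda\EE f(\Gamma)}$ with $\tfrac{d}{ds}\log F$ integrable near $\infty$), smoothness and decay reductions for the integration‑by‑parts step, and the mollification limit. All of this is routine but must be stated carefully, and I expect it to be the only real friction. A historically older alternative is the geometric route: take $A=\{f\le m\}$ for a median $m$ of $f$, observe $A_t\subseteq\{f\le m+t\}$ by the $1$‑Lipschitz property, and apply the Gaussian isoperimetric inequality $\gamma_n(A_t)\ge\Phi\bigl(\Phi^{-1}(\gamma_n(A))+t\bigr)\ge\Phi(t)$ to get $\PP(f(\Gamma)>m+t)\le 1-\Phi(t)\le\tfrac12 e^{-t^2/2}$, with the symmetric bound for the lower tail. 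That argument, however, yields concentration around a \emph{median} and needs the Gaussian isoperimetric inequality as a black box, whereas the semigroup proof above is self‑contained and produces precisely the mean‑centred form with the constant $2$ as stated.
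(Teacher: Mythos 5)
Your proof is correct. Note, however, that the paper does not prove this statement at all: it is quoted as a classical ingredient (the Sudakov--Tsirelson/Borell concentration theorem, cited to \cite{Borell75-2}) and used as a black box, so there is nothing in the paper to compare against step by step. Your semigroup argument is the standard Herbst-type derivation: the reduction to smooth, unit-Lipschitz $f$ by rescaling and mollification is sound (convolution with a Gaussian preserves the Lipschitz constant and converges uniformly), the Gaussian integration-by-parts identity and the commutation $\nabla P_s h = e^{-s}P_s(\nabla h)$ are the right tools, and the differential inequality $\frac{d}{ds}\log F(s)\ge -\lambda^2 e^{-2s}$ integrates to exactly the sub-Gaussian moment bound $\EE[e^{\lambda(f(\Gamma)-\EE f(\Gamma))}]\le e^{\lambda^2/2}$, which via Chernoff yields the two-sided tail bound with the constant $2$ and centering at the mean precisely as stated. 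This is arguably preferable to the isoperimetric route you mention as an alternative, since that one centers at a median and requires the Gaussian isoperimetric inequality as input. The only loose ends are the analytic justifications you yourself flag (differentiation under the expectation, the boundary behavior as $s\to\infty$), all of which are routine given the a priori bound $|f(x)|\le |f(0)|+\Vert x\Vert$ and Gaussian integrability; none of them threatens the argument.
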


The last classical ingredient is known as Urysohn's inequality.

\begin{theorem} \label{Urysohn} (Urysohn's inequality)
Let $K$ be a convex body and let $B$ be a centered Euclidean ball satisfying $\gamma(K) = \gamma(B)$. Then $w(K)\geq w(B)$.
\end{theorem}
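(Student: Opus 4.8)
The plan is to prove the Gaussian Urysohn inequality by the classical symmetrization argument, with the log-concavity of the Gaussian measure (Pr\'ekopa--Leindler) playing the role that Brunn--Minkowski plays in the Lebesgue-volume version. Write $h_K(\theta)=\max_{x\in K}\langle\theta,x\rangle$ for the support function, so that $w(K)=\EE[h_K(\Gamma)]$. Decomposing $\Gamma=\|\Gamma\|\,\Theta$ into independent radial and angular parts ($\Theta$ uniform on $\Sph$) and using the positive homogeneity $h_K(\rho\theta)=\rho\,h_K(\theta)$ gives $w(K)=\EE\|\Gamma\|\cdot\int_{\Sph}h_K\,d\sigma$, where $\sigma$ is the uniform probability measure on $\Sph$; in particular, for the centered Euclidean ball $\rho B$ of radius $\rho$, both $w(\rho B)=\rho\,\EE\|\Gamma\|$ and $\gamma_n(\rho B)=\PP(\|\Gamma\|\le\rho)$ are strictly increasing in $\rho$.

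Next, introduce the rotational Minkowski average $\widetilde K:=\int_{\mathrm{SO}(n)}UK\,dU$ with respect to Haar measure, i.e.\ the convex body with support function $h_{\widetilde K}(\theta)=\int_{\mathrm{SO}(n)}h_K(U^{-1}\theta)\,dU$. For each fixed $\theta\in\Sph$ the vector $U^{-1}\theta$ is uniform on $\Sph$, so $h_{\widetilde K}$ is the constant $r:=\int_{\Sph}h_K\,d\sigma$; hence $\widetilde K=rB$ is a centered Euclidean ball, and by the formula above $w(\widetilde K)=r\,\EE\|\Gamma\|=w(K)$.

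The crux is the inequality $\gamma_n(\widetilde K)\ge\gamma_n(K)$. Here I would use the log-concavity of $\gamma_n$: for convex bodies $A,C$ and $\lambda\in[0,1]$, $\gamma_n(\lambda A+(1-\lambda)C)\ge\gamma_n(A)^{\lambda}\gamma_n(C)^{1-\lambda}$. Iterating this with $N$ independent Haar-random rotations $U_1,\dots,U_N$ and using the rotation invariance $\gamma_n(U_iK)=\gamma_n(K)$ yields $\gamma_n\!\big(\tfrac1N\sum_{i=1}^N U_iK\big)\ge\prod_{i}\gamma_n(U_iK)^{1/N}=\gamma_n(K)$ almost surely. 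By the strong law of large numbers the support functions $\theta\mapsto\tfrac1N\sum_i h_K(U_i^{-1}\theta)$ converge to the constant $r$ pointwise, and uniformly on $\Sph$ since they are uniformly Lipschitz; hence $\tfrac1N\sum_i U_iK\to\widetilde K$ in the Hausdorff metric, and since $\gamma_n$ is continuous with respect to Hausdorff distance on convex bodies, $\gamma_n(\widetilde K)\ge\gamma_n(K)$ follows.

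To conclude, $\widetilde K$ and $B$ are both centered Euclidean balls with $\gamma_n(\widetilde K)\ge\gamma_n(K)=\gamma_n(B)$, so by the strict monotonicity of $\gamma_n$ on centered balls the radius of $\widetilde K$ is at least that of $B$, and by the monotonicity of $w$ on centered balls we get $w(K)=w(\widetilde K)\ge w(B)$, as desired. The one genuinely delicate point is the passage from finitely many rotations to the continuous rotational average when proving $\gamma_n(\widetilde K)\ge\gamma_n(K)$ --- that is, combining the iterated Pr\'ekopa--Leindler bound with the Hausdorff convergence of the finite averages; everything else is routine bookkeeping.
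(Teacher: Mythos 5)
Your proposal is correct and follows essentially the same route as the paper's own sketch: rotational Minkowski averaging, the log-concavity of $\gamma_n$ applied to finite averages of random rotations, the strong law of large numbers to identify the limiting body as a centered ball of the same mean width, and continuity of $\gamma_n$ with respect to the support function. The only cosmetic difference is that the paper fixes the ball $B'$ with $w(B')=w(K)$ in advance and shows $\gamma(B')\geq\gamma(K)$, whereas you pass to the limit body $\widetilde K$ directly and then compare radii; the substance is identical.
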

When the Gaussian measure is replaced by Lebesgue measure, this is a classic inequality in convex geometry proven in \cite{Urysohn24}. The proof for the Gaussian measure follows the same lines. For completeness, we provide a sketch of this proof.
\begin{proof} (sketch)
Let $B'$ be the centered Euclidean ball satisfying $w(B') = w(K)$. By the monotonicity of $w(\cdot)$ it is clearly enough to show that $\gamma(B') \geq \gamma(K)$. For two convex bodies $K_1, K_2$ we denote by $K_1 + K_2$ the Minkowski-sum of the two, namely
\begin{equation} \label{minkadd}
K_1 + K_1 = \{x + y; ~ x \in K_1, y \in K_2  \}.
\end{equation}
It is straightforward to check that, by definition $w(K_1 + K_2) = w(K_1)  + w(K_2)$. Let $U_1,U_2,...$ be a sequence of independent orthogonal transformations in $\RR^n$ uniformly distributed in the orthogonal group $SO(n)$. Define
$$
K_N = \frac{1}{N} \sum_{j=1}^N U_j K.
$$
Then it follows from \eqref{minkadd} and by induction that $w(K_N) = w(K)$. Moreover, since the Gaussian measure is log-concave (which follows from \cite{Borell75-1}), we have that
$$
\gamma(K_N) = \gamma \left ( \frac{1}{N} \sum_{j=1}^N U_j K \right ) \geq \left (\prod_{j=1}^N \gamma(U_j K) \right )^{1/N} = \gamma(K).
$$
Therefore, in order to prove the theorem it is enough to show that
\begin{equation} \label{limitKn}
\lim_{N \to \infty} \gamma(K_N) = \gamma(B').
\end{equation}
But remark that by definition of the body $K$ and by the strong law of large numbers we have for all $\theta \in \Sph$,
$$
\max_{x \in K_N} \langle x, \theta \rangle \to \EE \left [ \max_{x \in K_1} \langle x, \theta \rangle \right ] = \frac{w(K)}{\EE[ |\Gamma| ]}.
$$
almost surely, as $N \to \infty$. By definition of $B'$ this implies that, as $N \to \infty$,
$$
\max_{x \in K_N} \langle x, \theta \rangle \to \max_{x \in B'} \langle x, \theta \rangle, ~~ \forall \theta \in \Sph.
$$
Equation \eqref{limitKn} now follows by the continuity of the Gaussian measure of a set with respect to its support function.
\end{proof}

Urysohn's inequality gives the following simple corollary.
\begin{corollary} \label{cor:Urysohn}
Fix $\eps > 0$. Let $K \subset \RR^n$ be a convex set satisfying $\gamma_n(K) \geq e^{-\eps n}$. Then for large enough $n$, we have
$$
w(K) \geq (1-2\sqrt{\eps}) n.
$$
\end{corollary}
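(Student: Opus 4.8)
The plan is to deduce the bound from Urysohn's inequality (Theorem~\ref{Urysohn}) together with an explicit evaluation of the mean-width of a Euclidean ball and a Chernoff estimate for the $\chi^2$ distribution. We may assume $\eps < \tfrac14$, since otherwise $1 - 2\sqrt\eps \le 0$ and the statement is trivial (any nonempty $K$ has $w(K) \ge 0$, since $\max_{x\in K}\langle\Gamma,x\rangle \ge \langle\Gamma,x_0\rangle$ for a fixed $x_0\in K$). Let $B_R$ be the centered Euclidean ball of radius $R$ with $\gamma_n(B_R) = \gamma_n(K)$. By Theorem~\ref{Urysohn}, $w(K) \ge w(B_R)$; and since $\rho \mapsto \gamma_n(B_\rho)$ is strictly increasing while $\gamma_n(B_R) = \gamma_n(K) \ge e^{-\eps n}$, it suffices to (i) compute $w(B_R)$ in terms of $R$ and (ii) show that \emph{any} centered ball of Gaussian measure at least $e^{-\eps n}$ has radius at least $\sqrt{(1-2\sqrt\eps)\,n}$.

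For (i) one has $w(B_R) = \EE\big[\max_{x \in B_R}\langle\Gamma,x\rangle\big] = R\,\EE[\|\Gamma\|]$. I would invoke the two elementary facts $\EE[\|\Gamma\|] \le \big(\EE[\|\Gamma\|^2]\big)^{1/2} = \sqrt n$ and $\EE[\|\Gamma\|] \ge \sqrt{n-1}$, the latter because $x \mapsto \|x\|$ is $1$-Lipschitz, so $\VAR(\|\Gamma\|) \le 1$ by the Gaussian Poincaré inequality and hence $(\EE\|\Gamma\|)^2 = n - \VAR(\|\Gamma\|) \ge n-1$. Thus $w(B_R) \ge R\sqrt{n-1}$, and combining with (ii) gives $w(K) \ge \sqrt{(1-2\sqrt\eps)\,n(n-1)}$; this is $\ge (1-2\sqrt\eps)n$ precisely when $n(n-1) \ge (1-2\sqrt\eps)n^2$, i.e.\ when $n \ge 1/(2\sqrt\eps)$ — which is the meaning of ``large enough $n$''.

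For (ii), parametrize the radius as $\sqrt{tn}$ with $t \in (0,1)$ and note $\gamma_n(B_{\sqrt{tn}}) = \PP(\|\Gamma\|^2 \le tn)$ with $\|\Gamma\|^2 \sim \chi^2_n$. The standard Chernoff bound $\PP(\|\Gamma\|^2 \le tn) \le \inf_{\lambda > 0} e^{\lambda tn}(1+2\lambda)^{-n/2}$, whose infimum is attained at $\lambda = (1-t)/(2t) > 0$, gives
$$
\gamma_n\big(B_{\sqrt{tn}}\big) \le \exp\!\Big(\tfrac n2\,\big(1 - t + \ln t\big)\Big).
$$
The elementary inequality $1 - t + \ln t \le -\tfrac12(1-t)^2$ on $(0,1]$ — immediate, since $1 - t + \ln t + \tfrac12(1-t)^2$ has derivative $(1-t)^2/t \ge 0$ and vanishes at $t=1$, hence is $\le 0$ on $(0,1]$ — then shows that at $t = 1-2\sqrt\eps$ the right-hand side is at most $e^{-\eps n}$. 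Since $\gamma_n(B_\rho)$ increases in $\rho$, every centered ball of Gaussian measure at least $e^{-\eps n}$ has radius at least $\sqrt{(1-2\sqrt\eps)n}$, which is (ii).

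All of this is routine calculus and standard large-deviation bounds; the one place that needs a little care is the bookkeeping of the constant so that it emerges as exactly $1 - 2\sqrt\eps$: one has to use the two-sided bound $\sqrt{n-1} \le \EE\|\Gamma\| \le \sqrt n$ and the \emph{sharp} second-order estimate $1 - t + \ln t \le -\tfrac12(1-t)^2$ (a cruder constant in front of $(1-t)^2$ would degrade the bound), and then absorb the leftover $\sqrt{(n-1)/n}$ factor into the hypothesis that $n$ is large.
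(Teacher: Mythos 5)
Your proof is correct and follows the same overall structure as the paper's: apply Urysohn's inequality to reduce to the ball $B_R$ of the same Gaussian measure, then lower-bound separately the radius $R$ and the factor $\EE\|\Gamma\|$ in the identity $w(B_R) = R\,\EE\|\Gamma\|$. The two proofs differ only in the technical estimates used for these two bounds: the paper invokes Gaussian norm concentration (the inequality $\gamma_n(B(\sqrt n - \eta)) \le e^{-\eta^2/2}$) for both the radius bound $R \ge (1-\sqrt{2\eps})\sqrt n$ and the bound $\EE\|\Gamma\| \ge \sqrt n - 3\sqrt{\log n}$, whereas you use the $\chi^2$ Chernoff bound $\gamma_n(B_{\sqrt{tn}}) \le \exp\bigl(\tfrac n2(1-t+\ln t)\bigr)$ (sharpened via $1-t+\ln t \le -\tfrac12(1-t)^2$) to get $R \ge \sqrt{(1-2\sqrt\eps)n}$, and the Gaussian Poincar\'e inequality to get $\EE\|\Gamma\| \ge \sqrt{n-1}$. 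Your estimates are a touch cleaner and make the ``large enough $n$'' condition explicit ($n \ge 1/(2\sqrt\eps)$), and your Poincar\'e bound $\sqrt{n-1}$ is in fact sharper than the paper's $\sqrt n - 3\sqrt{\log n}$ for moderate-to-large $n$; but these are refinements in the bookkeeping rather than a different argument.
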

\begin{proof}
Denote by $B(r)$ centered Euclidian ball of radius $r$. Let $R>0$ be chosen such that $\gamma_n(B(R))=\gamma_n(K)$. An elementary calculation gives that for all $\eta > 0$,
\begin{equation}\label{ineq:ball}
\gamma_n \Bigl (B \bigl (\sqrt{n} - \eta \bigr) \Bigr ) \leq e^{-\eta^2 / 2}.
\end{equation}
Consequently, we have
$$
\gamma_n (B((1-\sqrt{2\eps})\sqrt{n})) \leq e^{-(\sqrt{2\eps})^2 n /2} = e^{-\eps n}
$$
which implies that $R\geq (1-\sqrt{2\eps}) \sqrt{n}$. Moreover Inequality~\eqref{ineq:ball} implies that

$$\EE[\|\Gamma\|]\geq (\sqrt{n}-2\sqrt{\log n})(1-e^{-2\log n})\geq \sqrt{n}-3\sqrt{\log n}$$
for large $n$ and therefore
\begin{align}
w(B(R))=\EE \left [\max_{x\in B(R)} x\cdot \Gamma \right ]\geq \EE \left [\frac{R \Gamma}{\|\Gamma\|}\cdot \Gamma \right ]= R\EE[\|\Gamma\|]\geq R\cdot  (\sqrt{n}-3\sqrt{\log n}) \geq (1-2\sqrt{\eps}) n
\end{align}
if  $\epsilon> 6\sqrt{\frac{\log n}{n}}$. An application of Theorem \ref{Urysohn} now gives
\begin{align*}
w(K) \geq w(B(R)) \geq (1-2\sqrt{\eps}) n
\end{align*}
and the corollary is proven.
\end{proof}

For $I \subset [n]$ define
$$
K(I) := K \cap \left (\bigcap_{i \in I} \{x_i \in [-1,1] \} \right ).
$$
The central Lemma needed for our proof will be the following:
\begin{lemma} \label{mainlemsec2}
Let $K$ be such that $\gamma(K) > e^{-\eps n}$. One has
\begin{equation} \label{mainlemineq}
\PP \left ( \inf_{I \subset [n] \atop |I| < \eps n}  \Gamma \cdot s_{K(I)}(\Gamma) \leq \bigl (1 - 32 \eps^{1/4} \bigr ) n \right ) \leq e^{- \eps n}.
\end{equation}
\end{lemma}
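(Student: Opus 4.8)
The plan is to fix a set $I \subset [n]$ with $|I| < \eps n$, prove that $\Gamma \cdot s_{K(I)}(\Gamma)$ is at least $(1 - 32\eps^{1/4})n$ except on an event of probability $e^{-\Omega(\sqrt{\eps}\, n)}$, and then take a union bound over the at most $\binom{n}{<\eps n}$ choices of $I$. For a fixed $I$, the first step is to lower bound the Gaussian mean-width $w(K(I))$. Since $K$ is symmetric and each constraint $\{x_i \in [-1,1]\} = \{|e_i \cdot x| \le 1\}$ is a symmetric strip, iterating \v{S}id\'ak's Lemma~\ref{lem:sidak} gives $\gamma_n(K(I)) \ge \gamma_n(K)\, \gamma_1([-1,1])^{|I|} \ge e^{-\eps n} \gamma_1([-1,1])^{\eps n} = e^{-c_0 \eps n}$, where $c_0 := 1 - \ln \gamma_1([-1,1]) > 1$ is an absolute constant (here we used $\gamma_1([-1,1]) < 1$ and $|I| < \eps n$).

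The obstacle is that one cannot apply Gaussian concentration to the support function $h_{K(I)}(\cdot) = \langle s_{K(I)}(\cdot), \cdot \rangle$ directly, since its Lipschitz constant is the circumradius of $K(I)$, which is bounded only by that of $K$ — and nothing in the hypotheses controls it (e.g.\ $K$ could be a huge ball). The fix is truncation: set $\tilde K(I) := K(I) \cap B(2\sqrt n)$, where $B(r)$ is the centered Euclidean ball of radius $r$. Because $\|\Gamma\|$ is $1$-Lipschitz with $\EE\|\Gamma\| \le \sqrt n$, Theorem~\ref{Gaussconc} gives $\gamma_n(B(2\sqrt n)^c) \le 2e^{-n/2}$, so $\gamma_n(\tilde K(I)) \ge e^{-c_0\eps n} - 2e^{-n/2} \ge e^{-2c_0\eps n}$ once $\eps$ is below an absolute constant and $n$ is large. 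Corollary~\ref{cor:Urysohn}, applied with $2c_0\eps$ in place of $\eps$ (the required $n$ depends only on this constant, not on $I$), then yields $w(\tilde K(I)) \ge (1 - 2\sqrt{2c_0\eps})\, n$ for $n$ large.

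Now $h_{\tilde K(I)}$ is $2\sqrt n$-Lipschitz because $\tilde K(I) \subseteq B(2\sqrt n)$, and $\EE[h_{\tilde K(I)}(\Gamma)] = w(\tilde K(I))$. For $\eps < 1$ one has $2\sqrt{2c_0\eps} + 16\eps^{1/4} \le 32\eps^{1/4}$, so $(1-32\eps^{1/4})n \le w(\tilde K(I)) - 2\sqrt n \cdot (8\eps^{1/4}\sqrt n)$; applying Theorem~\ref{Gaussconc} with $L = 2\sqrt n$ and $t = 8\eps^{1/4}\sqrt n$ gives
$$
\PP\bigl(h_{\tilde K(I)}(\Gamma) \le (1-32\eps^{1/4})n\bigr) \le \PP\bigl(h_{\tilde K(I)}(\Gamma) < w(\tilde K(I)) - 2\sqrt n\, t\bigr) < 2e^{-t^2/2} = 2e^{-32\sqrt\eps\, n}.
$$
Since $\tilde K(I) \subseteq K(I)$ we have $\Gamma \cdot s_{K(I)}(\Gamma) = h_{K(I)}(\Gamma) \ge h_{\tilde K(I)}(\Gamma)$, so the same bound holds for $\Gamma \cdot s_{K(I)}(\Gamma)$. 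This monotonicity is the whole point of the truncation: it trades the uncontrolled circumradius of $K$ for the clean Lipschitz constant $2\sqrt n$, at the negligible cost of replacing $\eps$ by $2c_0\eps$ in Urysohn.

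Finally, there are at most $\sum_{k < \eps n} \binom{n}{k} \le n(e/\eps)^{\eps n} = e^{(\eps\ln(e/\eps) + o(1))n}$ sets $I$ with $|I| < \eps n$, and since $\sqrt\eps\, \ln(e/\eps) \le 2/\sqrt e$ for all $\eps \in (0,1)$ we get $\eps \ln(e/\eps) < 2\sqrt\eps$, hence this count is at most $e^{3\sqrt\eps\, n}$ for $n$ large. A union bound gives
$$
\PP\Bigl(\inf_{I \subset [n],\ |I| < \eps n} \Gamma \cdot s_{K(I)}(\Gamma) \le (1-32\eps^{1/4})n\Bigr) \le e^{3\sqrt\eps\, n} \cdot 2e^{-32\sqrt\eps\, n} = 2e^{-29\sqrt\eps\, n} \le e^{-\eps n}
$$
for $n$ large, since $29\sqrt\eps > \eps$ absorbs the factor $2$. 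The only genuine requirements are that $\eps$ be smaller than an absolute constant and $n$ larger than a threshold depending on $\eps$; the decisive slack is that both the concentration gain ($32\sqrt\eps\, n$) and the union-bound loss ($\eps\ln(e/\eps)\, n$) are linear in $n$, and the former dominates precisely because of the gap between the target error $32\eps^{1/4}$ and the Urysohn error $O(\sqrt\eps)$. I expect the truncation step — verifying that $\gamma_n(\tilde K(I))$ is still $e^{-O(\eps)n}$ and that $w(\tilde K(I))$ is still $(1-O(\sqrt\eps))n$ — to be the only place requiring care; the rest is bookkeeping of constants.
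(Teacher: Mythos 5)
Your proof is correct and follows essentially the same route as the paper: truncate to a ball of radius $2\sqrt n$ so the support function is $2\sqrt n$-Lipschitz, use \v{S}id\'ak to lower-bound $\gamma_n(K(I))$, invoke Corollary~\ref{cor:Urysohn} to lower-bound the mean width, apply Gaussian concentration (Theorem~\ref{Gaussconc}) pointwise in $I$, and finish with a union bound whose cost $e^{O(\sqrt\eps)n}$ is dominated by the concentration gain $e^{-\Omega(\sqrt\eps)n}$. The only cosmetic deviation is that you truncate $K(I)$ after applying \v{S}id\'ak while the paper truncates $K$ once at the start and then works with $K' = K \cap 2\sqrt n B^n$ throughout; since $K'(I) = K(I) \cap 2\sqrt n B^n$ these are the same set, and the resulting constants differ only by unimportant amounts.
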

\begin{proof}
Our first step will be to show that it is legitimate to assume that $K$ is contained in a Euclidean ball of radius $2 \sqrt{n}$. Define $K' = K \cap 2 \sqrt{n} B^n$ (where $B^n$ denotes the Euclidean unit ball in $\RR^n$). The fact that $\Gamma \cdot s_{K(I)}(\Gamma) \geq \Gamma \cdot s_{K(I) \cap 2 \sqrt n B^n}(\Gamma)$ allows us to prove \eqref{mainlemineq} with $K'$ in place of $K$. Moreover, a standard calculation gives
$\gamma_n \bigl ( \RR^n \setminus 2 \sqrt{n} B^n \bigr) < e^{-n}$, so since we may assume that $\eps < \tfrac{1}{2}$, we have  $\gamma(K') \geq \frac{1}{2} e^{-\eps n}$. Therefore, from this point on we will allow ourselves assume that $K \subset 2 \sqrt{n} B^n$ by relaxing the assumption on the volume of $K$ to the assumption $\gamma(K) \geq \frac{1}{2} e^{- \eps n}$. \\

Fix $I \subset [n]$ with $|I| < \delta n$. Lemma \ref{lem:sidak} gives
\begin{equation}
\gamma_n(K(I)) \geq \gamma_n(K) \prod_{i \in I} \gamma_n \left ( \{x_i \in [-1,1] \} \right ) \geq \tfrac{1}{2} e^{-\eps n} \gamma([-1,1])^{|I|} \geq e^{- (\eps + \delta) n }.
\end{equation}
Corollary \ref{cor:Urysohn} now gives
$$
w(K(I)) \geq (1 - 2 \sqrt{\eps + \delta}) n
$$
or, in other words,
\begin{equation}
\EE \left [  \Gamma \cdot s_{K(I)}(\Gamma) \right ] \geq \bigl (1 - 2 \sqrt{\eps + \delta} \bigr ) n.
\end{equation}
Remark that, by the assumption $K \subset 10 \sqrt{n} B^n$, we have that the function
$$
y \to y \cdot S_{K(I)} (y) = \sup_{z \in K(I)} y \cdot z
$$
is $2 \sqrt{n}$-Lipschitz (here we use the fact that the supremum of $L$-Lipschitz functions is $L$-Lipschitz). Thus, by applying theorem \ref{Gaussconc} we get
$$
\PP \left ( \Gamma \cdot s_{K(I)} (\Gamma) < \bigl (1 - 2 \sqrt{\eps + \delta} - 8 \eta \bigr ) n  \right ) \leq 2 e^{- \eta^2 n }, ~~ \forall \eta > 0.
$$
By taking a union bound over all choices of $I$, we get
$$
\PP \left ( \inf_{I \subset [n] \atop |I| < \delta n}  \Gamma \cdot s_{K(I)}(\Gamma) \leq (1 - 2 \sqrt{\eps + \delta} - 8 \eta) n \right ) \leq n \left  (n \atop \lceil \delta n \rceil \right ) e^{-\eta^2 n}
$$
$$
\leq e^{ \left ( \left (1 + \log \tfrac{1}{\delta} \right ) \delta - \eta^2 \right ) n } \leq e^{ (\sqrt{\delta} - \eta^2)n }.
$$
The proof is concluded by taking $\delta = \eps$ and $\eta = 2 \eps^{1/4}$.
\end{proof}

We are finally ready to prove the main theorem of the section.

\begin{proof} [Proof of Theorem \ref{thm:main2}]
Using the fact that removing constraints which are not tight at the optimal solution does not change the optimum value, we obtain that
$$
a(\Gamma) < \delta \Rightarrow \Gamma \cdot s_{K \cap C}(\Gamma) \geq  \inf_{I \subset [n] \atop |I| < \delta n}  \Gamma \cdot s_{K(I)}(\Gamma).
$$
It follows that (choosing $\delta = \eps$)
$$
\PP \left ( \Gamma \cdot s_{K \cap C} (\Gamma) < (1 - 32 \eps^{1/4}) n \right ) < \PP(a(\Gamma) > \eps) + \PP \left ( \inf_{I \subset [n] \atop |I| < \eps n}  \Gamma \cdot s_{K(I)}(\Gamma) < (1 - 32 \eps^{1/4}) n  \right )
$$
and by Markov's inequality together with the result of Lemma \ref{mainlemsec2},
$$
\EE \left [ \Gamma \cdot s_{K \cap C} (\Gamma) \right ] \geq \left  (1 - 32 \eps^{1/4} \right )  (1 - \PP(a(\Gamma) > \eps) - e^{- \eps n} ) n
$$
(here we used the fact that $K$ contains the origin which implies that $\Gamma \cdot s_{K \cap C} (\Gamma) \geq 0$). But on the other hand
$$
\EE \left [ \Gamma \cdot s_{K \cap C} (\Gamma) \right ] \leq w(C) = \EE \left [\max_{x \in C} \langle x, \Gamma \rangle \right ]
$$
$$
= \EE \left [\sum_{i \in [n]} |\Gamma_i| \right ] = n \EE[ |\Gamma_1|] =\sqrt{\frac{2}{\pi}} n.
$$
Combining those two inequalities finally gives
$$
\PP(a(\Gamma) > \eps) > 1 - \frac{\sqrt{2}}{\sqrt{\pi} (1 - 32 \eps^{1/4})} - e^{-\eps n}.
$$
The theorem is complete.
\end{proof}

\textbf{Extension to Full Coloring}
While Theorem~\ref{thm:main1} gives only a partial coloring, it can be applied recursively to obtain the following result of Spencer\cite{Spencer85}; see Lemma 10, Rothvo\ss~\cite{Rothvoss14} for details regarding the recursion.

\begin{corollary}\label{cor:spencer}
Given a universe $U=\{1,\ldots,n\}$ and sets $S_1,\ldots, S_m\subseteq U$, there exists a coloring $\chi:U\rightarrow [-1,1]^n$ such that $\max_{i\in [m]} |\sum_{j\in S_i} \chi(j)|=O(\sqrt{n \log {2m/n}})$.
\end{corollary}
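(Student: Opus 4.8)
\emph{Overview and set-up.} The plan is to build $\chi$ by applying Theorem~\ref{thm:main2} repeatedly: each application rounds a constant fraction of the still-fractional coordinates to $\pm 1$ at the cost of a small amount of discrepancy, and one sums the contributions of the $O(\log n)$ rounds. Fix a set $A\subseteq[n]$ of not-yet-rounded coordinates, $n'=|A|$, and for parameters $\lambda_i\ge 0$ consider the symmetric convex body
$$
K_A=\Bigl\{x\in\RR^{A}:\Bigl|\sum_{j\in S_i\cap A}x_j\Bigr|\le\lambda_i \text{ for all } i\in[m]\Bigr\},
$$
an intersection of $m$ strips. Since $\sum_{j\in S_i\cap A}\Gamma_j\sim N(0,|S_i\cap A|)$ and $|S_i\cap A|\le n'$, \v{S}id\'ak's Lemma~\ref{lem:sidak} together with the Gaussian tail bound $\PP(|\Gamma_1|>t)\le e^{-t^2/2}$ and $-\log(1-p)\le 2p$ for $p\le\tfrac12$ gives
$$
\gamma_{n'}(K_A)\ \ge\ \prod_{i=1}^{m}\PP\Bigl(|\Gamma_1|\le\lambda_i/\sqrt{|S_i\cap A|}\Bigr)\ \ge\ \exp\Bigl(-2\sum_{i=1}^{m}e^{-\lambda_i^2/(2|S_i\cap A|)}\Bigr).
$$
Choosing $\lambda_i=\tau\sqrt{|S_i\cap A|}$ with $\tau=C\sqrt{\max(1,\log(2m/n'))}$ for a large enough constant $C=C(\eps)$, where $\eps$ is the constant of Theorem~\ref{thm:main2}, makes $\gamma_{n'}(K_A)\ge e^{-\eps n'}$. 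Theorem~\ref{thm:main2} then yields (with probability $\ge\tfrac12$, which suffices for an existence statement) a point $x_A\in K_A\cap[-1,1]^{A}$ with at least $\delta n'$ coordinates in $\{-1,1\}$; note $\lambda_i\le\tau\sqrt{n'}=O(\sqrt{n'\log(2m/n')})$ for every $i$.

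\emph{The recursion and the geometric sum.} Starting from $A_0=[n]$, run the above, let $F_1$ be the coordinates rounded to $\pm1$, fix them permanently, and set $A_1=A_0\setminus F_1$, so $|A_1|\le(1-\delta)n$; iterating produces $A_0\supsetneq A_1\supsetneq\cdots\supsetneq A_T$ with $|A_k|\le(1-\delta)^k n$, stopping after $T=O_\delta(\log n)$ rounds when $|A_T|=O(1)$ and coloring the remaining coordinates arbitrarily. Following \cite[Lemma~10]{Rothvoss14}, the partial coloring produced in round $k$ is used to bound not the absolute discrepancy but the change in the running (fractional) coloring restricted to $A_{k-1}$, so the total discrepancy telescopes:
$$
\Bigl|\sum_{j\in S_i}\chi(j)\Bigr|\ \le\ \sum_{k=1}^{T}O\Bigl(\sqrt{|A_{k-1}|\,\log(2m/|A_{k-1}|)}\Bigr)+O(1).
$$
Since $|A_{k-1}|\le(1-\delta)^{k-1}n$ and $\log(2m/|A_{k-1}|)\le\log(2m/n)+O(k)$, the $k$-th summand is at most $(1-\delta)^{(k-1)/2}\sqrt{n}\,\bigl(\sqrt{\log(2m/n)}+O(\sqrt{k})\bigr)$; summing this geometrically decaying series gives $O(\sqrt{n\log(2m/n)})$, which is the claimed bound.

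\emph{Main obstacle.} The delicate point is exactly the telescoping invoked above. Theorem~\ref{thm:main2} as stated controls the \emph{absolute} quantity $\sum_{j\in S_i\cap A}x_j$, whereas summing the per-round widths requires controlling the discrepancy \emph{relative to the current partial coloring}, i.e.\ a version of Theorem~\ref{thm:main2} that starts from an arbitrary point $x_0\in(-1,1)^{A}$. Re-centering by $x_0$ turns the symmetric strips into non-centered ones and the cube $[-1,1]^{A}$ into a non-centered box, so one must either re-run the mean-width argument of Section~\ref{sec:lp} in this slightly more general setting or package it so that it applies to the shifted instance. This bookkeeping is precisely what is carried out in \cite[Lemma~10]{Rothvoss14}, which we therefore invoke rather than reproduce.
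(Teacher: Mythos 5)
Your proposal is correct and follows essentially the same route as the paper: the paper's ``proof'' of this corollary is just a pointer to Lemma~10 of Rothvo\ss~\cite{Rothvoss14}, and your sketch faithfully unpacks that recursion (the \v{S}id\'ak-based lower bound on $\gamma_{n'}(K_A)$, the width choice $\lambda_i=\tau\sqrt{|S_i\cap A|}$, and the geometrically decaying sum over rounds), while correctly isolating the shifted-starting-point issue as the one technical piece that must be imported from Rothvo\ss. One small note: the paper's prose cites Theorem~\ref{thm:main1} here, but that is plainly a slip for Theorem~\ref{thm:main2} (the paper itself points out that Theorem~\ref{thm:main1} cannot be recursed since it only guarantees $x\in K$ rather than $x\in K\cap[-1,1]^n$), and you correctly work with Theorem~\ref{thm:main2}.
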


\section{A coordinate-by-coordinate algorithm for the non-symmetric case}\label{sec:main1}

%

In this section, we prove Theorem~\ref{thm:main1}. The main ingredient in the proof is Lemma~\ref{mainlem} from which the proof follows immediately. In Section~\ref{sec:algorithm}, we provide the algorithm implementing the guarantee in the lemma.
\subsection{The main lemma for the recursion}
Our goal in this section is to prove the following Lemma.
\begin{lemma} \label{mainlem}
For any constant $\alpha\geq 0$  there exist constants $0<\eta, \tau<1$ such that the following holds. Suppose that $K \subset \RR^n$ is such that $\gamma_n(K) > e^{-\eta n}$ then there exists $i \in [n]$ and $\xi \in \{-1,1\}$ such that
$$
\gamma_{n-1} \left (K \cap \{x_i = \alpha \xi \} \right ) \geq \tau \gamma_{n} (K).
$$
\end{lemma}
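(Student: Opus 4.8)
The plan is to use a second-moment / variance argument on the marginals of the restricted Gaussian measure $\gamma_K$, exploiting the fact that $\gamma_K$ is ``more log-concave'' than $\gamma_n$ itself. First I would normalize: let $\mu$ be the probability measure on $\RR^n$ obtained by restricting $\gamma_n$ to $K$ (so $\mu = \gamma_n|_K / \gamma_n(K)$), and let $\Sigma$ be its covariance matrix. Since the density of $\mu$ is proportional to $e^{-\|x\|^2/2}$ on $K$, it is the restriction of a standard Gaussian to a convex set; by the Brascamp--Lieb inequality (or the Cramér--Rao / Brascamp--Lieb bound for log-concave measures whose potential has Hessian $\succeq I$) every one-dimensional marginal of $\mu$ has variance at most $1$, i.e. $\Sigma \preceq I$. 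Consequently $\mathrm{tr}(\Sigma) \le n$, so there exists a coordinate $i$ with $\VAR_\mu(x_i) \le 1$. I would also need a lower bound on how spread out $\mu$ is: because $\gamma_n(K)\ge e^{-\eta n}$, Corollary \ref{cor:Urysohn}-type reasoning (or directly the bound $\int \|x\|^2 d\mu \ge$ something like $(1-O(\sqrt\eta))n$, since a body of Gaussian mass $e^{-\eta n}$ cannot be concentrated too close to the origin) gives $\mathrm{tr}(\Sigma) + \|\EE_\mu x\|^2 = \EE_\mu\|x\|^2 \ge (1-c\sqrt\eta)n$. Combining, $\sum_i \big(\VAR_\mu(x_i) + (\EE_\mu x_i)^2\big) \ge (1-c\sqrt\eta)n$ while $\sum_i \VAR_\mu(x_i) \le n$; this is the heart of the argument and I will return to how to use it.

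Next, the key structural point: for each coordinate $i$, the one-dimensional marginal of $\mu$ in the $x_i$ direction is itself the restriction of a standard one-dimensional Gaussian to a convex set (an interval or half-line), because conditioning a Gaussian restricted to a convex body onto a coordinate hyperplane, then projecting, preserves the ``Gaussian restricted to an interval'' structure after integrating out — more precisely, the marginal density $f_i(t)$ is $\frac{1}{\gamma_n(K)} e^{-t^2/2} \cdot \mathrm{Leb}_{n-1}$-type quantity, which is $e^{-t^2/2}$ times a log-concave function of $t$ (log-concavity of the $(n-1)$-dimensional slice volumes, by Prékopa--Leindler), hence $f_i$ is a log-concave function that is ``more log-concave than a standard Gaussian.'' For such a density, having variance $\le 1$ forces the density to be bounded below on a set of positive measure near its mean; quantitatively, a log-concave probability density $f$ on $\RR$ with $f(t) \le g(t)$ where $g$ is $e^{-t^2/2}$-type weighted satisfies $\|f\|_\infty \ge c/\sqrt{\VAR(f)} \ge c$, and by log-concavity $f(t) \ge c'$ on an interval of length $c''$ around the mode. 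Then for the chosen coordinate $i$, the point $t_0 = \alpha\xi$ for an appropriate sign $\xi$ lands inside (or close to) this interval — here is where I need the mean $\EE_\mu x_i$ not to be wildly far from $0$ relative to the variance, or I need to pick $i$ more carefully so that $|\EE_\mu x_i|$ is small AND $\VAR_\mu(x_i)$ is $\Theta(1)$ from below as well as above. That gives $f_i(\alpha\xi) \ge \tau'$, and unwinding the definition of $f_i$ yields $\gamma_{n-1}(K \cap \{x_i = \alpha\xi\}) \ge \tau' \gamma_n(K) \cdot (\text{normalizing const})$, i.e. the claim with $\tau = \tau'(\eta,\alpha)$.

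The remaining obstacle, and I expect it to be the main one, is the double-sided control on a single well-chosen coordinate: I need a coordinate $i$ for which simultaneously (a) $\VAR_\mu(x_i)$ is bounded above (automatic for most $i$ since $\sum_i \VAR \le n$) so the marginal isn't too flat and $f_i$ is bounded below near its mode, (b) $\VAR_\mu(x_i)$ is bounded \emph{below} by a constant, OR $|\EE_\mu x_i|$ is bounded below, so that the marginal has enough ``reach'' to cover the point $\alpha$ (if $\alpha > 0$; for $\alpha = 0$ only (a) is needed, and the mode of a Gaussian-restricted-to-interval density is within $O(1)$ of $0$ when the variance is $\Theta(1)$). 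To get (a)+(b) together I would argue by contradiction: if every coordinate with small variance also had small mean, then $\EE_\mu\|x\|^2 = \sum_i(\VAR + (\EE x_i)^2)$ would be too small, contradicting the $\ge (1-c\sqrt\eta)n$ lower bound — so a constant fraction of coordinates, and in particular at least one, has variance in a fixed window $[c_1, c_2]$, and on that coordinate the marginal log-concave density is bounded below by a constant on a fixed-length interval containing a point at distance $\alpha$ from $0$ (choosing the sign $\xi$ toward the mode). Making the chain of constants $\eta \to \tau$ explicit is routine once this coordinate-selection step is in place; I would not grind through it here.
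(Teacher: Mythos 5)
Your overall strategy---control the covariance of $\gamma_K$, use the ``more log-concave than Gaussian'' structure of marginals to cap each one-dimensional variance at $1$, and then argue that some marginal density must be bounded below at $\pm\alpha$---is indeed the route the paper takes. But there is a genuine gap in the coordinate-selection step, and it is precisely the ingredient you flagged as ``the remaining obstacle.''

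You never establish an \emph{upper} bound on the barycenter $\|\EE_\mu x\|$. The paper proves (Lemma~\ref{cortalagrand}, inequality~\eqref{eqcentroid}) that $\|\int x\, d\gamma_K\|^2 \lesssim \eta n$. This bound is not optional: you have $\sum_i v_i \le n$ and $\sum_i v_i + \sum_i u_i^2 \ge (1-c\sqrt\eta) n$ (your $\mathrm{tr}\,\Sigma + \|\EE_\mu x\|^2$ lower bound), but these two facts alone do \emph{not} force $\sum_i v_i$ to be close to $n$; the slack could all be absorbed by a large $\sum_i u_i^2$. Once you also know $\sum_i u_i^2 \lesssim \eta n$, a one-line Markov argument on a uniformly random index gives a coordinate with \emph{simultaneously} $|u_i|^2 \lesssim \eta$ and $v_i \ge 1 - O(\sqrt\eta)$, which is exactly the input the density lower bound needs. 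Your proposed contradiction argument (``if every coordinate with small variance also had small mean\ldots'') does not deliver this: as stated it only yields a coordinate with variance in some window $[c_1,c_2]$, and your fallback alternative ``$|\EE_\mu x_i|$ is bounded below'' goes the wrong way---a marginal with mean far from $0$ and $O(1)$ variance will have negligible density at $\pm\alpha$, not large density, and ``choosing the sign toward the mode'' does not fix this. Relatedly, for the final step you need the variance to be close to $1$, not merely bounded away from $0$: a log-concave density of variance $c_1 < 1$ can be supported on an interval of length $O(\sqrt{c_1})$ that misses $\pm\alpha$ entirely when $\alpha$ is of order $1$. The paper's Lemma~\ref{logconcave} makes this quantitative via Caffarelli's contraction map (the transport $T$ of the standard Gaussian to the marginal is $1$-Lipschitz; variance $\ge 1-\eps$ forces $T$ to be $\eps$-close to the identity in an averaged sense, so $T^{-1}(\alpha)$ is bounded and $f(\alpha) \ge \gamma_1(T^{-1}(\alpha))$). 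Your sketch of the density lower bound (``$\|f\|_\infty \ge c$, hence $f \ge c'$ on an interval around the mode'') needs the same quantitative input and would have to be carried out essentially as the paper does.

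In short: the skeleton is right, the variance upper bound via Brascamp--Lieb is a fine alternative to the paper's Caffarelli-based Lemma~\ref{lem:projection}, and the lower bound on $\EE_\mu\|x\|^2$ is correct, but the proof is missing the barycenter bound~\eqref{eqcentroid}, and the ensuing selection argument is not sound as written.
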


The proof of Theorem~\ref{thm:main1} now follows from Lemma~\ref{mainlem} by induction.  Given $\alpha \geq 0$, let $0<\eta, \tau<1$ be constants satisfying Lemma~\ref{mainlem}. Let $\eps := \frac{\eta}{4}$ and $\delta: = \frac{\eta}{2 \log \frac{1}{\tau}}$ and it is easy to check that the condition of Lemma~\ref{mainlem} continues to hold for at least $\delta n$ applications of Lemma~\ref{mainlem} giving the existence theorem. An algorithm which efficiently finds this sequence of coordinates is described in Section \ref{sec:algorithm}. \\

Before, we prove Lemma~\ref{mainlem}, we give a few definitions and preliminaries. For a subset $K \subset \RR^n$, we define $\gamma_K$ to the probability measure such that for each measurable $B\subseteq \RR^n$,
$$
\gamma_K(B) = \frac{\gamma_n(K \cap B)}{\gamma_n(K)}.
$$
We will first need the following technical estimate. Let $\|\cdot\|$ denote the Euclidian norm on $\RR^n$
\begin{lemma} \label{cortalagrand}
For all $K \subset \RR^n$ one has
\begin{equation} \label{eqcentroid}
\left \|\int x d \gamma_K(x) \right \| \leq 4 \sqrt{\log \left ( \frac{2}{\gamma_n(K)} \right )}
\end{equation}
and
\begin{equation} \label{eqtrcov}
n - 6 \sqrt{n} \sqrt{2\log \left ( \frac{4}{\gamma_n(K)} \right )} \leq \int \|x\|^2 d \gamma_K(x)
\end{equation}
\end{lemma}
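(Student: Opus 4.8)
The plan is to derive both bounds from one‑dimensional estimates on the standard Gaussian by a rearrangement (``bathtub'') argument; set $p:=\gamma_n(K)$ throughout. For \eqref{eqcentroid}, let $v:=\int x\,d\gamma_K(x)$, assume $v\neq0$, and put $\theta:=v/\|v\|$, so $\|v\|=\int\langle x,\theta\rangle\,d\gamma_K(x)=\tfrac1p\,\EE[\langle\Gamma,\theta\rangle\mathbf{1}_{\{\Gamma\in K\}}]$. Since $g:=\langle\Gamma,\theta\rangle$ is a standard one‑dimensional Gaussian and $\{\Gamma\in K\}$ has probability $p$, I would invoke the elementary rearrangement inequality: among events $A$ with $\PP(A)=p$, $\EE[g\mathbf{1}_A]$ is maximised by the half‑line $A=\{g>s\}$ with $\PP(g>s)=p$ --- a one‑line mass‑transfer estimate using that $g>s$ on $\{g>s\}\setminus A$ and $g\le s$ on $A\setminus\{g>s\}$, two sets of equal probability. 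As $\EE[g\mathbf{1}_{\{g>s\}}]=\varphi(s)$ (the Gaussian density at $s$), this gives $\|v\|\le\varphi(s)/\PP(g>s)$, i.e.\ $\|v\|$ is at most the inverse Mills ratio at the $s$ with $\PP(g>s)=p$, and it remains to check the routine one‑variable bound $\varphi(s)/\PP(g>s)\le4\sqrt{\log(2/\PP(g>s))}$ --- by splitting into $s\le1$, where the (increasing) ratio is bounded by the absolute constant $\varphi(1)/\PP(g>1)$ while $\log(2/p)\ge\log2$, and $s>1$, where $\PP(g>s)\le\varphi(s)$ forces $s^2\le2\log(2/p)$ and $\varphi(s)/\PP(g>s)\le s+1/s$ via $\PP(g>s)\ge\tfrac{s}{1+s^2}\varphi(s)$.

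For \eqref{eqtrcov} I would first dispose of two easy cases. If $6\sqrt{2\log(4/p)}\ge\sqrt n$ the right‑hand side is $\le0$ and the claim is vacuous (this covers all small $n$), so assume $\sqrt{2\log(4/p)}<\sqrt n/6$. If $p\ge\tfrac12$, then from $\EE\|\Gamma\|^2=n$,
$$\int\|x\|^2\,d\gamma_K-n=\frac1p\int_{\RR^n\setminus K}(n-\|x\|^2)\,d\gamma_n\;\ge\;-\frac1p\,\EE\bigl[(\|\Gamma\|^2-n)^+\bigr]\;\ge\;-\frac{1}{2p}\sqrt{\VAR(\|\Gamma\|^2)}=-\frac{\sqrt{n/2}}{p}\;\ge\;-\sqrt{2n},$$
which suffices since $\sqrt{2n}\le6\sqrt n\sqrt{2\log(4/p)}$. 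The substantive case is $p<\tfrac12$. Let $q>0$ be the $p$‑quantile of $\|\Gamma\|$, i.e.\ $\gamma_n(\{\|x\|\le q\})=p$. The same rearrangement inequality, now \emph{minimising} $\int_A\|x\|^2\,d\gamma_n$ over $\gamma_n(A)=p$, singles out the ball $\{\|x\|\le q\}$, so $\int\|x\|^2\,d\gamma_K\ge\EE[\|\Gamma\|^2\mid\|\Gamma\|\le q]\ge q^2-2q\,\EE[(q-\|\Gamma\|)\mid\|\Gamma\|\le q]$ (using $q^2-\|\Gamma\|^2\le2q(q-\|\Gamma\|)$ on the event).

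Two estimates then finish the argument. First, Gaussian concentration (Theorem~\ref{Gaussconc}) for the $1$‑Lipschitz map $x\mapsto\|x\|$ gives $p=\PP(\|\Gamma\|\le q)\le2e^{-(\EE\|\Gamma\|-q)^2/2}$, hence $q\ge\EE\|\Gamma\|-\sqrt{2\log(2/p)}$; together with $\EE\|\Gamma\|\ge\sqrt{n-1}$ this yields $q^2\ge n-2\sqrt n\sqrt{2\log(2/p)}-O(\sqrt n)$. Second --- the one place where I would use more than the Lipschitz bound --- the radial density $r\mapsto r^{n-1}e^{-r^2/2}$ of $\gamma_n$ is log‑concave, so $\|\Gamma\|$ has a log‑concave law, and for any log‑concave $R$ with density $\rho$ and c.d.f.\ $F$ one has $\EE[q-R\mid R\le q]\le F(q)/\rho(q)$ (the conditional survival function of $q-R$ is log‑concave with slope $-\rho(q)/F(q)$ at $0$, hence is dominated by $e^{-\rho(q)s/F(q)}$). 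Since log‑concavity makes the reversed hazard rate $\rho/F$ non‑increasing, and $q$ lies below the median of $\|\Gamma\|$ when $p<\tfrac12$, this bound is at most $1/(2\rho(\mathrm{median}))$, an absolute constant by the standard fact that $\rho(\mathrm{median})\sqrt{\VAR(\|\Gamma\|)}$ is bounded below for log‑concave laws. Plugging these into $\int\|x\|^2\,d\gamma_K\ge q^2-2q\,\EE[(q-\|\Gamma\|)\mid\|\Gamma\|\le q]$, and using $q\le\EE\|\Gamma\|\le\sqrt n$ and $\sqrt{2\log(4/p)}>\sqrt{2\log8}$ (since $p<\tfrac12$), one gets $\int\|x\|^2\,d\gamma_K\ge n-2\sqrt n\sqrt{2\log(4/p)}-O(\sqrt n)\ge n-6\sqrt n\sqrt{2\log(4/p)}$; the constants $6$ and the ``$4$'' in $4/p$ are generous enough to absorb the $O(\sqrt n)$.

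The step I expect to be the main obstacle is the second estimate just above. The content of \eqref{eqtrcov} is the sharp dependence $n-O(\sqrt n\sqrt{\log(1/p)})$, which crude reasoning does not give: ``conditionally on $\Gamma\in K$, $\|\Gamma\|\gtrsim\sqrt n$ with probability $\tfrac12$'' only yields $\gtrsim n/2$, and the complement‑identity bound used above degrades like $1/p$, hence is useless precisely for $p$ exponentially small in $n$ --- which is the interesting range. Recovering the correct bound really does require using that the radial part of $\gamma_n$ is log‑concave of ``width'' $O(1)$, so that conditioning on the ball $\{\|\Gamma\|\le q\}$ pins $\|\Gamma\|$ within $O(1)$ of $q$ rather than spreading it over $[0,q]$. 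Everything else is routine; the only extra care needed is to check that the parameter ranges in which the displayed estimates would degenerate (tiny $p$, small $n$) are exactly the ranges in which the asserted inequality is already trivial.
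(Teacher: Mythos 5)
Your proof is correct, and the overall strategy coincides with the paper's: both derive each bound by a bathtub/rearrangement argument that replaces $\gamma_K$ by the extremal restriction of a one--dimensional law (a half--line for the mean, a ball for the second moment) at the same mass level $p=\gamma_n(K)$, and then run a one--dimensional Gaussian computation. For \eqref{eqcentroid} the two arguments are essentially identical: the paper does the rearrangement at the density level and bounds the conditional mean of a Gaussian above the $p$--quantile by an explicit ``elementary calculation,'' which is precisely your inverse Mills ratio bound; the calculus is the same.

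For \eqref{eqtrcov} you take a genuinely different route at the last step. After both proofs reduce to $\EE[\|\Gamma\|^2 \mid \|\Gamma\|\le q]$ with $q$ the $p$--quantile of $\|\Gamma\|$ and use Gaussian concentration of the $1$--Lipschitz map $\|\cdot\|$ to get $q\ge \sqrt{n}-O(\sqrt{\log(1/p)})$, the paper finishes with a direct layer--cake summation over annuli at radii $\sqrt n - k\sqrt{2\log(2/p)}$, reading off the answer from the tail bounds alone. You instead write $\EE[\|\Gamma\|^2\mid\|\Gamma\|\le q]\ge q^2 - 2q\,\EE[q-\|\Gamma\|\mid\|\Gamma\|\le q]$ and control the correction term structurally: the radial density $r\mapsto r^{n-1}e^{-r^2/2}$ is log--concave, so the conditional survival function of $q-\|\Gamma\|$ is dominated by an exponential with rate $\rho(q)/F(q)$, and that reversed hazard rate is monotone and bounded below at the median by an absolute constant thanks to the standard log--concave fact $\rho(m)\,\sigma\gtrsim 1$ combined with $\VAR(\|\Gamma\|)\le 1$. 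This is a clean and more conceptual derivation of the $O(1)$ ``width'' of $\|\Gamma\|$ after conditioning, whereas the paper's sum is more elementary but a bit opaque. Your easy--case splits ($p\ge\tfrac12$ via the variance of the chi--square, and the vacuous range $6\sqrt{2\log(4/p)}\ge\sqrt n$) are fine, and the generous constants you keep (the $6$, and $4$ in $4/p$) do absorb the $O(\sqrt n)$ and $O(1)$ slack, as you note. The only thing I would flag for a fully rigorous write--up is to state a concrete numerical version of the log--concave median--density bound so that the claimed constants provably fit; as written this is asserted but not pinned down.
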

\begin{proof}
We first prove \eqref{eqcentroid}. Let $X$ be a random variable distributed with law $\gamma_K$. Define $\theta = \frac{\EE X}{\|\EE X\|}$ (if the denominator is zero then \eqref{eqcentroid} follows trivially). Let $f(x)$ be the density of the variable $\langle X, \theta \rangle$. We clearly have that for each $x\in \RR$,
$$
f(x) \leq \frac{ \gamma_1(x)}{\gamma_n(K)}.
$$

 Define $g(x) = \mathbf{1}_{x \geq \alpha} \frac{ \gamma_1(x)}{\gamma_n(K)}$ where $\alpha$ is chosen such that $\int_\RR g(x) dx = 1$, i.e., $\alpha=\Phi^{-1}(\gamma_n(K))$ where $\Phi$ denotes the one dimensional Gaussian (cumulative) distribution function. Since $\int g = \int f$, we have that $g(x) \geq f(x)$ for all $x \geq \alpha$ and $g(x) \leq f(x)$ for all $x \leq \alpha$. Consequently,
$$
\int_\RR x g(x) dx - \int_\RR x f(x) dx = \int_{\RR} x (f(x) - g(x)) dx = \int_{\RR} (x - \alpha) (f(x) - g(x)) dx \geq 0.
$$
Therefore,
$$
\left \|\int x d \gamma_K(x) \right \| = \EE[\langle X, \theta \rangle ] = \int_\RR x f(x) dx \leq \int_\RR x g(x) dx
$$
$$
= \frac{\int_{ \{x \geq \alpha\} } x \gamma_1(x) dx } { \int_{ \{x \geq \alpha\} } \gamma_1(x) dx }
$$
Now, an elementary calculation gives that
$$
\left \|\int x d \gamma_K(x) \right \| \leq \frac{\int_{ \{x \geq \alpha\} } x \gamma_1(x) dx } { \int_{ \{x \geq \alpha\} } \gamma_1(x) dx } \leq  2 |\alpha| +\frac{1}{4}= 2 |\Phi^{-1} (\gamma_n(K))|+\frac{1}{4} \leq 4 \sqrt{\log \left ( \frac{2}{\gamma_n(K)} \right )}
$$
and equation \eqref{eqcentroid} is established.

We turn to the second estimate, whose proof is based on exactly the same idea only that $x \cdot \theta$ is replaced by $\|x\|$. Let $f(x)$ be the density of the variable $\|X\|$ and let $h(x)$ be the density of $\|\Gamma\|$ where $\Gamma$ is a standard Gaussian random variable in $\RR^n$. We clearly have
$$
f(x) \leq \frac{h(x)}{\gamma_n(K)}.
$$
Define $g(x) = \mathbf{1}_{|x| \leq \alpha} \frac{h(x)}{\gamma_n(K)} $ where $\alpha$ is chosen such that $\int g(x) dx = 1$. Again, since $\int g = \int f$, we have that $g(x) \geq f(x)$ for all $|x| \leq \alpha$ and $g(x) \leq f(x)$ for all $|x| \geq \alpha$, and therefore
$$
\int_\RR x^2 g(x) dx - \int_\RR x^2 f(x) dx = \int_{\RR} x^2 (g(x) - f(x)) dx = \int_{\RR} (x^2 - \alpha^2) (g(x) - f(x)) dx \leq 0
$$
or, in other words,
\begin{equation} \label{eq111}
\int_{\RR^n} \|x\|^2 d \gamma_K(x) = \int_{\RR} x^2 f(x) dx \geq \int_{\RR} x^2 g(x) dx
= \frac{\int_{ \{|x| \leq \alpha\} } x^2 h(x)dx  } { \int_{ \{|x| \leq \alpha\} } h(x)dx }.
\end{equation}
Next, we recall the following elementary fact (which follows by a straightforward calculation); if $\Gamma$ is a standard Gaussian random variable in $\RR^n$ then
$$
\PP( \vert \|\Gamma\| - \sqrt{n} \vert > t ) \leq 2 e^{-t^2/2}.
$$
It follows that
$$
\int_{|x|\leq \sqrt{n} -  \sqrt{2\log \frac{2}{\gamma_n(K)}}} h(x) dx \leq \gamma_n(K)
$$
and thus $\alpha \geq \sqrt{n} -   \sqrt{2\log \frac{2}{\gamma_n(K)}}$.
Now, we also have
$$
\int_{|x|\leq \sqrt{n} -  k\sqrt{2\log \frac{2}{\gamma_n(K)}}} h(x) dx \leq \gamma_n(K)^{k^2}.
$$
First assume that $\gamma_n(K)\leq \frac12$. We estimate

\begin{align}
\int_{ \{|x| \leq \alpha\} } x^2 h(x)dx &\geq \sum_{k=1}^{\infty}\left(\sqrt{n} -  (k+1)\sqrt{2\log \frac{2}{\gamma_n(K)} }\right)^2 \left(\gamma_n(K)^{k^2}-\gamma_n(K)^{(k+1)^2}\right)\\
&\geq \left(\sqrt{n} -  3\sqrt{2\log \frac{2}{\gamma_n(K)} }\right)^2\gamma_n(K)\label{eqn:3eqn}.
\end{align}
Thus, we have
\begin{align}
\int \|x\|^2 d \gamma_K(x) &\geq \frac{\int_{ \{|x| \leq \alpha\} } x^2 h(x)dx  } { \int_{ \{|x| \leq \alpha\} } h(x)dx}\\
&\geq \frac{\left(\sqrt{n} -  3\sqrt{2\log \frac{2}{\gamma_n(K)} }\right)^2\gamma_n(K)}{\gamma_n(K)}=\left(\sqrt{n} -  3\sqrt{2\log \frac{2}{\gamma_n(K)} }\right)^2
\end{align}
giving us the claim.

Otherwise, consider the case when $\gamma_n(K)\geq \frac12$. Using the fact that  $\int_{ \{|x| \leq \alpha\} } x^2 h(x)dx$ decreases if we decrease $\alpha$, inequality~\eqref{eqn:3eqn} implies
\begin{align*}
\int_{ \{|x| \leq \alpha\} } x^2 h(x)dx &\geq  \left(\sqrt{n} -  3\sqrt{2\log \frac{2}{\frac{1}{2}} }\right)^2\gamma_n(K)
\end{align*}
Thus we have
\begin{align}
\int \|x\|^2 d \gamma_K(x) &\geq \frac{\int_{ \{|x| \leq \alpha\} } x^2 h(x)dx  } { \int_{ \{|x| \leq \alpha\} } h(x)dx}\\
&\geq \frac{\left(\sqrt{n} -  3\sqrt{2\log 4 }\right)^2\gamma_n(K)}{\gamma_n(K)}\geq \left(\sqrt{n} -  3\sqrt{2\log \frac{4}{\gamma_n(K)} }\right)^2
\end{align}
which finishes the proof.
\end{proof}

%

An essential ingredient for the proof will be the following one-dimensional version of Caffarelli's contraction theorem \cite{Caffarelli00}.

\begin{proposition} \label{Cafarelli}
Let $\mu$ be a probability measure on $\RR$ having the form $\frac{d \mu}{dx} = f(x) = e^{-x^2/2 - V(x)}$ where $V(x)$ is a convex function. Then there exists a unique monotone, differentiable function $T: \RR \to \RR$ satisfying
\begin{equation} \label{defT}
\mu( (-\infty, T(x)] ) = \gamma((-\infty, x]), ~~ \forall x \in \RR.
\end{equation}
Moreover, the function $T$ is a contraction, namely
\begin{equation} \label{contraction1}
|T(x) - T(y)| \leq |x-y|
\end{equation}
for all $x,y \in \RR$.
\end{proposition}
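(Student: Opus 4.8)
The plan is to realize $T$ as the monotone transport map from $\gamma$ to $\mu$ and to prove the contraction bound by the one-dimensional version of Caffarelli's maximum-principle argument for the Monge--Amp\`ere equation.

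Since $V$ is finite and convex it is continuous, so $f$ is continuous and strictly positive and the distribution function $F_\mu(t)=\mu((-\infty,t])$ is a $C^1$ strictly increasing bijection of $\RR$ onto $(0,1)$; the same is true of $\Phi$. Hence $T:=F_\mu^{-1}\circ\Phi$ is the unique increasing function satisfying \eqref{defT} (a decreasing map cannot satisfy \eqref{defT} as $x\to+\infty$), it is $C^1$, and differentiating $F_\mu(T(x))=\Phi(x)$ gives $f(T(x))T'(x)=\varphi(x)$, where $\varphi(x)=\tfrac1{\sqrt{2\pi}}e^{-x^2/2}$; in particular $T'>0$. Because $T$ is increasing, \eqref{contraction1} is equivalent to the pointwise bound $T'\le 1$.

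To prove $T'\le 1$ I would first reduce to a convenient class of potentials. By mollification (which preserves convexity), and then by replacing $V$ outside a large interval $[-R,R]$ with its one-sided tangent lines (smoothing the two kinks on an arbitrarily small scale), one obtains smooth convex potentials $V_R$ with $V_R\le V$, $V_R\uparrow V$ locally uniformly, and such that $\mu_R\propto e^{-x^2/2-V_R}$ has genuine (shifted-Gaussian) tails. For such $\mu_R$ a direct tail computation shows that $T_R(x)-x$ has finite limits as $x\to\pm\infty$ and $T_R'(x)\to 1$, so $r_R:=\log T_R'$ tends to $0$ at $\pm\infty$. Now fix a smooth convex $V$ of this type and set $r=\log T'$: from $f=e^{-x^2/2-V}$ (normalized to integrate to $1$) one gets $r(x)=\tfrac12T(x)^2-\tfrac12x^2+V(T(x))-\tfrac12\log 2\pi$, hence $r'(x)=\bigl(T(x)+V'(T(x))\bigr)T'(x)-x$, and using $T''=r'T'$ one finds, at any critical point $x_0$ of $r$,
$$
r''(x_0)=\bigl(1+V''(T(x_0))\bigr)T'(x_0)^2-1\ \ge\ T'(x_0)^2-1\ =\ e^{2r(x_0)}-1 ,
$$
where the inequality is exactly the hypothesis $V''\ge 0$. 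If $\sup r>0$, then since $r\to 0$ at $\pm\infty$ the supremum is attained at a finite point $x_0$, where $r'(x_0)=0$ and $r''(x_0)\le 0$; the displayed inequality then forces $e^{2r(x_0)}\le 1$, a contradiction. Hence $\sup r\le 0$, i.e.\ $T'\le 1$, i.e.\ $T$ is $1$-Lipschitz — for every $V_R$. Finally $\mu_R\to\mu$ weakly with converging normalizations (dominated convergence, using $V_R\ge V_{R_0}$ for $R\ge R_0$), so $F_{\mu_R}\to F_\mu$ pointwise and $T_R\to T$ pointwise, and the bound $|T_R(x)-T_R(y)|\le|x-y|$ passes to the limit, proving \eqref{contraction1}.

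The genuinely delicate point is the globalization: the second-derivative computation only shows that \emph{interior} local maxima of $r$ are harmless, and one must still exclude the scenario in which $r$ stays strictly positive along an entire half-line. The tail-truncation step is precisely what handles this, by forcing $T_R'\to 1$ at infinity so that the supremum of $r$ is attained; once that is in place, everything reduces to the two-line second-derivative inequality above and to routine convergence of one-dimensional transport maps.
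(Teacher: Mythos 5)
Your proof follows essentially the same route as the paper's (heuristic) proof: realize $T$ as the monotone transport map, differentiate $\log T'$ twice, observe that $V''\ge 0$ forces $T'\le 1$ at any interior critical point of $\log T'$, and then handle the globalization by an approximation argument. Two small points in your favor: you correctly obtain $\log T'(x)=\tfrac12(T(x)^2-x^2)+V(T(x))-\tfrac12\log 2\pi$ (the paper writes $V(x)$ where it should be $V(T(x))$, a typo that does not affect the conclusion), and your truncation of $V$ by tangent lines outside $[-R,R]$ --- ensuring shifted-Gaussian tails so that $T_R'\to 1$ at $\pm\infty$ and the supremum of $\log T_R'$ is attained --- makes concrete the globalization step that the paper only gestures at with ``approximating $\mu$ by compactly supported measures.''
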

The proof can be found in \cite{Caffarelli00}. For completeness, we give a heuristic proof.
\begin{proof}
By differentiating both sides of equation \eqref{defT} with respect to $x$, we see that $T$ must satisfy
\begin{equation} \label{Tderiv}
T'(x) = \frac{\gamma(x)}{f(T(x))}.
\end{equation}
Together with the boundary condition $\lim_{-\infty} T(x) = \inf supp(\mu)$ (where $supp(\mu)$ denotes the support of $\mu$), the existence of $T$ now follows from the Picard-Lidel\"of theorem (a standard ODE existence and uniqueness theorem).

Next, we want to show that $T$ is a contraction. The previous equality suggests that
$$
\log T'(x) = \frac{- x^2 + T(x)^2}{2} + V(x).
$$
By differentiating this equation twice with respect to $x$, we get
$$
(\log T'(x))'' = -1 + T'(x)^2 + T(x) T''(x) + V''(x).
$$
Now, let $x_0$ be a point where $T'(x)$ attains a local maximum, then for this point we have that the left hand side is negative and $T''(x) = 0$. Using the fact that $V'' \geq 0$, we get that
$$
T'(x_0)^2 \leq 1.
$$
Equation \eqref{Tderiv} also shows that $T'(x)$ is continuous, so it is enough to show that $T'(x)$ attains a maximum in $\RR$. This follows by approximating $\mu$ by compactly supported measures.
\end{proof}

We will need two more lemmas. The first lemma shows that the projection of a restriction of a Gaussian measure is more log-concave than the Gaussian measure. The proof follows from Prekopa-Leindler inequality.

\begin{lemma}\label{lem:projection}
Let $K$ be a convex body in $\RR^n$, $i\in [n]$ and let $f:\RR\rightarrow \RR$ be the density of the marginal of $\gamma_K$ on to the direction $e_i$, i.e., the unique (in the almost-everywhere sense) function satisfying
$$\int_B f(x)dx=\gamma_K(\{x\in \RR^n: x_i\in B\} ), \;\; \forall B\subseteq \RR \textrm{ measurable}.$$
Then the function $f$ attains the form
\begin{equation} \label{formf}
f(x)=e^{-x^2/2-V(x)}
\end{equation}
for some convex function $V(x)$. Moreover, if $X$ is a random variable with density $f(x)$ then $\VAR[X] \leq 1$.
\end{lemma}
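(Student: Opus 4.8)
The plan is to deduce the structural claim \eqref{formf} from log-concavity of a one-dimensional marginal, and then to obtain the variance bound from Caffarelli's contraction theorem (Proposition~\ref{Cafarelli}), which is already available to us. First I would set up the reduction: assume without loss of generality that $i = n$ and write a point of $\RR^n$ as $(x', t)$ with $x' \in \RR^{n-1}$ and $t \in \RR$. Unwinding the definition of $\gamma_K$, the marginal density is
$$
f(t) = \frac{1}{(2\pi)^{n/2}\, \gamma_n(K)} \int_{\RR^{n-1}} e^{-\|x'\|^2/2}\, e^{-t^2/2}\, \mathbf{1}_K(x', t)\, dx' = c\, e^{-t^2/2}\, g(t),
$$
where $c := \bigl((2\pi)^{n/2}\gamma_n(K)\bigr)^{-1}$ and $g(t) := \int_{\RR^{n-1}} e^{-\|x'\|^2/2}\, \mathbf{1}_K(x', t)\, dx'$. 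Thus $f(t) = e^{-t^2/2 - V(t)}$ with $V(t) := -\log g(t) - \log c$, so \eqref{formf} reduces to showing that $g$ is log-concave; then $-\log g$, and hence $V$, is convex, with the usual convention $V \equiv +\infty$ off the support of $g$, which is harmless since $f$ is a genuine probability density, being a marginal of the probability measure $\gamma_K$.

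To show $g$ is log-concave I would invoke the Prékopa--Leindler inequality in the form ``a marginal of a log-concave function is log-concave'' (Prékopa's theorem). The integrand defining $g$ is the product of $(x',t) \mapsto e^{-\|x'\|^2/2}$, which is log-concave on $\RR^n$ because its potential $\|x'\|^2/2$ is convex, and $(x',t) \mapsto \mathbf{1}_K(x',t)$, which is log-concave because $K$ is convex; a product of log-concave functions is log-concave, and integrating out $x' \in \RR^{n-1}$ preserves log-concavity. Hence $g$, and therefore $f$, has the claimed form \eqref{formf}.

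For the \emph{moreover} part, I would apply Proposition~\ref{Cafarelli} to the law $\mu$ of $X$, which has exactly the form $e^{-x^2/2 - V(x)}$ with $V$ convex by the first part. This yields a monotone $1$-Lipschitz transport map $T : \RR \to \RR$ with $T_\#\gamma = \mu$, so that $X$ has the same law as $T(Z)$ for $Z$ a standard one-dimensional Gaussian. Taking an independent copy $Z'$ of $Z$ and using the identity $\VAR[Y] = \tfrac12 \EE[(Y - Y')^2]$ for i.i.d.\ $Y, Y'$, together with the contraction bound $|T(Z) - T(Z')| \le |Z - Z'|$,
$$
\VAR[X] = \tfrac12\, \EE\bigl[(T(Z) - T(Z'))^2\bigr] \le \tfrac12\, \EE\bigl[(Z - Z')^2\bigr] = 1 ,
$$
which is the desired bound. (Alternatively, the Gaussian Poincaré inequality gives $\VAR[T(Z)] \le \EE[T'(Z)^2] \le 1$ directly from $|T'| \le 1$.)

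There is no deep obstacle here; the work is mostly in correctly assembling classical ingredients. The points that need a little care are: checking that $g$ is finite and Borel measurable (immediate, since $0 \le g(t) \le (2\pi)^{(n-1)/2}$ and $K$ is Borel, so Fubini applies and the displayed identity for $f$ is valid), and the fact that $V$ may take the value $+\infty$ outside a possibly one-sided or unbounded interval --- this does not affect convexity, and, exactly as in the proof of Proposition~\ref{Cafarelli}, Caffarelli's theorem still applies after approximating $\mu$ by compactly supported measures.
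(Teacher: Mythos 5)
Your argument is correct and matches the paper's proof essentially step for step: both establish \eqref{formf} by applying Pr\'ekopa--Leindler to show the one-dimensional marginal of the log-concave function $e^{-\|x'\|^2/2}\mathbf{1}_K$ is log-concave, and both bound $\VAR[X]$ by pushing forward the Gaussian through the $1$-Lipschitz Caffarelli map $T$ from Proposition~\ref{Cafarelli}. The only cosmetic deviation is the final variance estimate --- you use the i.i.d.\ symmetrization identity $\VAR[X]=\tfrac12\EE[(T(Z)-T(Z'))^2]$ (or, alternatively, Gaussian Poincar\'e), while the paper centres at $T(0)$ and bounds $\int (T(x)-T(0))^2\,d\gamma \le \int x^2\,d\gamma = 1$ --- but these are interchangeable one-line arguments.
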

\begin{proof}
Consider the function $g(x) = \exp \left (\frac{1}{2} \sum_{j \neq i} x_j^2 \right ) \mathbf{1}_K(x)$. Since the function $\sum_{j \neq i} x_j^2$ is convex and since $\mathbf{1}_K$ is log-concave, the function $g$ is log-concave. By the Pr\'{e}kopa-Leindler inequality, the function $h: \RR \to \RR$ defined by
$$
h(y) = \int g(x_1,..,x_{i-1}, y, x_{i+1},.., x_n) d x_1 ... d x_{i-1} dx_{i+1} ... dx_n
$$
is log-concave as well. Thus, there exists a convex function $V(x)$ such that $h(x) = exp(-V(x))$. But note that by definition of the function $f$, there exists a normalization  constant $Z>0$ such that
$$
f(x) = Z^{-1} h(x) e^{-x^2/2}.
$$
This establishes the fact that $f$ attains the form \eqref{formf}. For the second part of the lemma, we use Proposition \ref{Cafarelli} to construct a function $T$ which pushes forward the standard Gaussian measure to the measure whose density is $f$. By equations \eqref{defT} and \eqref{contraction1}, we have
$$
\VAR[X] = \int_\RR (x - \EE[X])^2 f(x) dx = \int_\RR (T(x) - \EE[X])^2 d \gamma(x)
$$
$$
\leq \int_\RR (T(x) - T(0))^2 d \gamma(x) \leq \int_\RR (x - 0)^2 d \gamma(x) = 1.
$$
The lemma is complete.
\end{proof}


\begin{lemma}\label{logconcave}

Let $V(X)$ be a convex function such that $\frac{d \mu}{dx} = f(x) = e^{-x^2/2 - V(x)}$ is a probability density. Let $\eps, \alpha>0$ be constants which satisfy
\begin{equation} \label{condaeps}
4 \eps^{2/3} < \alpha < \frac{1}{6} \sqrt{ \log \tfrac 1 \eps - \log(2 \pi)}
\end{equation}
Let $X$ be a random variable with density $f(x)$. Suppose that
\begin{equation} \label{eqexpeps}
\bigl |\EE[X] \bigr |=\left|\int x e^{-x^2/2 - V(x)} dx\right| < \eps
\end{equation}
and
\begin{equation} \label{eqeps}
Var[X]=\int \left(x - E[X]\right)^2 e^{-x^2/2 - V(x)} dx > 1 - \eps.
\end{equation}
Then we have
\begin{equation}
\max(f(\alpha), f(-\alpha)) > \frac{1}{\sqrt{2\pi}}e^{-2 \alpha^2}.
\end{equation}

\end{lemma}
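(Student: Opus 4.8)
\emph{Proof plan.} The idea is to transport $\gamma$ to $\mu:=f\,dx$ by the one‑dimensional Caffarelli map and to use the hypotheses to show this map is essentially a tiny shift near the origin. By Proposition~\ref{Cafarelli} there is a monotone differentiable contraction $T:\RR\to\RR$ with $\mu((-\infty,T(x)])=\gamma((-\infty,x])$; thus $T$ pushes the standard Gaussian law forward to $\mu$ and, differentiating, $f(T(x))\,T'(x)=\gamma(x)$ with $0\le T'\le1$, so $f(T(x))\ge\gamma(x)$ for every $x$. Write $Z\sim\gamma$, $X:=T(Z)$ (which has density $f$), $m:=\EE X$, $t_0:=T(0)$, and introduce the transport displacement $D(x):=x-T(x)+t_0$, which is non‑decreasing, vanishes at $0$, and satisfies $|D(x)|\le|x|$. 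It will then suffice to produce $x_\alpha\in(0,2\alpha)$ with $T(x_\alpha)=\alpha$: since $x_\alpha<2\alpha$ this gives $f(\alpha)=\gamma(x_\alpha)/T'(x_\alpha)\ge\gamma(x_\alpha)=\tfrac1{\sqrt{2\pi}}e^{-x_\alpha^2/2}>\tfrac1{\sqrt{2\pi}}e^{-2\alpha^2}$, which is the assertion.

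\emph{Step 1 (the displacement is small).} First I would exploit $\mathrm{Var}(X)>1-\eps$ together with the pointwise identity $x^2-(T(x)-t_0)^2=D(x)\bigl(x+(T(x)-t_0)\bigr)\ge|x|\,|D(x)|\ge0$, valid because $x$, $D(x)$ and $T(x)-t_0$ all carry the sign of $x$. Integrating against $\gamma$,
\[
\int_\RR|x|\,|D(x)|\,\gamma(x)\,dx\ \le\ 1-\EE[(X-t_0)^2]\ \le\ 1-\mathrm{Var}(X)\ <\ \eps .
\]
Because $D$ is monotone with $D(0)=0$, integrating $x\,D(x)\,\gamma(x)$ over $[x_0,\infty)$ (and symmetrically on the negative ray) and using $\int_{x_0}^\infty x\gamma(x)\,dx=\gamma(x_0)$ then yields the pointwise bound $|D(x)|\le\sqrt{2\pi}\,\eps\,e^{x^2/2}$, equivalently $|D(x)|\,\gamma(x)\le\eps$ for all $x$.

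\emph{Step 2 (the shift $t_0$ is $O(\eps)$).} Next I would use $|m|<\eps$. Since $\EE[D(Z)]=\EE Z-\EE T(Z)+t_0=t_0-m$, splitting $\int_\RR|D(x)|\gamma(x)\,dx$ at $|x|=1$ (using $|D|\gamma\le\eps$ on $[-1,1]$ and $|D(x)|\le|x|\,|D(x)|$ outside) gives $|t_0-m|\le\int_\RR|D|\gamma<3\eps$, hence $|t_0|\le|m|+3\eps<4\eps<4\eps^{2/3}<\alpha$ by the left inequality of \eqref{condaeps} (and $\eps<1$). This $O(\eps)$ bound on $t_0$ — much better than the $O(\sqrt\eps)$ one gets from $|t_0-m|\le\sqrt{1-\mathrm{Var}(X)}$ — is the crucial point, because $\alpha$ is allowed to be as small as $4\eps^{2/3}\ll\sqrt\eps$, so the crude bound would not even place $t_0$ inside $(-\alpha,\alpha)$; this is why one must work with the displacement $D$ rather than with second moments.

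\emph{Step 3 ($T$ reaches $\alpha$ quickly) and conclusion.} I claim $T(2\alpha)>\alpha$. Otherwise $D(2\alpha)=2\alpha-T(2\alpha)+t_0\ge\alpha+t_0>\alpha-4\eps>4\eps^{2/3}-4\eps$, while Step~1 gives $D(2\alpha)\le\sqrt{2\pi}\,\eps\,e^{2\alpha^2}$; the right inequality of \eqref{condaeps} forces $e^{2\alpha^2}\le(2\pi\eps)^{-1/18}$, so $D(2\alpha)<(2\pi)^{4/9}\eps^{17/18}$, and $4\eps^{2/3}-4\eps<(2\pi)^{4/9}\eps^{17/18}$ is impossible for $\eps$ small (the solution set of \eqref{condaeps} is non‑empty only for small $\eps$, so ``$\eps$ small enough'' is automatic; this is exactly what the constants of \eqref{condaeps} buy). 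Hence $T(2\alpha)>\alpha$, and since $T$ is continuous with $T(0)=t_0<\alpha<T(2\alpha)$, the intermediate value theorem yields $x_\alpha\in(0,2\alpha)$ with $T(x_\alpha)=\alpha$; then $\max(f(\alpha),f(-\alpha))\ge f(\alpha)=\gamma(x_\alpha)/T'(x_\alpha)\ge\gamma(x_\alpha)=\tfrac1{\sqrt{2\pi}}e^{-x_\alpha^2/2}>\tfrac1{\sqrt{2\pi}}e^{-2\alpha^2}$, as required. The hard part is Steps~1–2; Step~3 is essentially bookkeeping once one sees that the upper bound on $\alpha$ is precisely what makes $\sqrt{2\pi}\,\eps\,e^{2\alpha^2}$ negligible compared with $\alpha$. (Note nothing here assumes $f$ has full support: if $\mu$ is supported strictly below $\alpha$ then $T(2\alpha)<\alpha$ trivially, and the same contradiction in Step~3 applies.)
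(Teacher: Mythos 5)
Your proof is correct, and it follows the same overall strategy as the paper: transport $\gamma$ to $\mu$ via the one-dimensional Caffarelli contraction $T$, show that $T(0)$ is small, show that $T$ reaches $\alpha$ already within a window of length $O(\alpha)$ about the origin, and conclude via $f(T(x)) = \gamma(x)/T'(x) \geq \gamma(x)$. The technical implementation differs in a way worth noting. The paper only works with the integral bound $\int |x - T(x) + T(0)|\,|x + T(x) - T(0)|\,d\gamma \leq \eps$, estimates $|T(0)|$ by splitting the integral at a free parameter $\delta = \eps^{1/3}$ (getting $|T(0)| \leq 4\eps^{2/3}$), and then runs a case analysis on the first point $\beta$ where the displacement exceeds $\alpha + 4\eps^{2/3}$. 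You instead extract from the same variance inequality the \emph{pointwise} bound $|D(x)|\,\gamma(x) \leq \eps$ on the displacement $D(x) = x - T(x) + T(0)$, which is a stronger piece of information: it gives the sharper bound $|T(0)| < 4\eps$ (versus $4\eps^{2/3}$), and it lets you rule out $T(2\alpha) \leq \alpha$ directly with no case split. The endgame verification that the resulting inequalities are incompatible under \eqref{condaeps} is essentially the same kind of bookkeeping in both arguments.

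One consequence of your cleaner bound is that you actually establish the stated constant: you get $f(\alpha) > \tfrac{1}{\sqrt{2\pi}}e^{-2\alpha^2}$ because $x_\alpha < 2\alpha$. The paper's own proof, in its first case, only obtains $|T^{-1}(\alpha)| \leq 2\alpha + 4\eps^{2/3}$ and hence $f(\alpha) \geq \tfrac{1}{\sqrt{2\pi}}e^{-(3\alpha)^2/2}$, which is \emph{weaker} than the exponent $-2\alpha^2$ appearing in the lemma's statement; notice the paper later sets $\tau = \tfrac{1}{\sqrt{2\pi}}e^{-(2\alpha+4\eps^{2/3})^2/2}$, consistent with its proof rather than with the lemma as stated. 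So your argument not only matches the paper's approach, it patches a small inconsistency between the lemma's stated conclusion and the paper's own proof.
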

\begin{proof}
Let $T(x)$ be the monotone push-forward of the standard Gaussian measure to the measure $\mu$, hence the monotone map defined by equation \eqref{defT} of Proposition \ref{Cafarelli}. According this proposition, we have that
\begin{equation} \label{contraction}
|T(x) - T(y)| \leq |x-y|, ~~ \forall x,y \in \RR.
\end{equation}
Let $u=\EE[X]$ denote the expectation of random variable $X$ with density $f(x)$. Now, by definition of $T(x)$, we have
$$
\int (T(x) - u)^2 d \gamma = \int (x - u)^2 e^{x^2/2 - V(x)} dx
$$
and by convexity together with \eqref{eqeps},
$$
\int (T(x) - T(0))^2 d \gamma \geq \int (x - u)^2 e^{x^2/2 - V(x)} dx > 1 - \eps.
$$
In other words, we have
$$
\int_{\RR} (x^2 - (T(x) - T(0))^2) d \gamma \leq \eps.
$$
Consequently,
\begin{equation}\label{eq:diff1}
\int_{\RR} |x - T(x) + T(0)| |x + T(x) - T(0)|d\gamma= \int_{\RR} (x - T(x) + T(0)) (x + T(x) - T(0)) d \gamma \leq \eps.
\end{equation}
where the first equality follows since the two terms  $(x - T(x) + T(0))$ and $(x + T(x) - T(0))$ have the same sign for all $x\in \RR$, by the fact that $T$ is a monotone contraction.

Next, we would like to show that $|T(0)|$ is bounded by a function of $\eps$. To this end, let $\delta$ be a parameter we fix later. We calculate,
\begin{eqnarray}
|T(0)|&=& \left |\int_{\RR} T(0) d\gamma \right |\leq  \left |\int_{\RR} x-T(x)+T(0) d\gamma \right | + \left |\int_{\RR} x-T(x) d\gamma \right |\label{ineq:t01}\\
&\leq & \left  |\int_{-\delta}^{\delta} x-T(x)+T(0) d\gamma \right | + \left  |\int_{x\in \RR \setminus [-\delta,\delta]} x-T(x)+T(0) d\gamma \right |+ \eps \label{ineq:t02}\\
&\leq & \int_{-\delta}^{\delta} \delta  d\gamma +  \frac{1}{\delta}\int_{x\in \RR \setminus [-\delta,\delta]}  |x-T(x)+T(0)||x + T(x) - T(0)| d\gamma+\eps\label{ineq:t03}\\
&\leq &2\delta^2 + \frac{\eps}{\delta} +\eps\label{ineq:t04}
\end{eqnarray}

where Inequality~\eqref{ineq:t01} follows from triangle inequality, Inequality~\eqref{ineq:t02} follows from the fact $\int_{\RR} xd\gamma=0$ and $|\int_{\RR} T(x)d\gamma|=|\int_{\RR} x d\mu|\leq \eps$. Inequality~\eqref{ineq:t03} follows from that $|x-T(x)+T(0)|\leq |x|\leq \delta$ for any $x\in [-\delta,\delta]$ and $|x+T(x)-T(0)|\geq |x|\geq \delta$ for any $x\in \RR\setminus [-\delta,\delta]$ using Inequality~\eqref{contraction}. Inequality~\eqref{ineq:t04} follows from standard Gaussian estimates and Inequality~\eqref{eq:diff1}.
Now choosing $\delta =\eps^{1/3}$, we obtain that $|T(0)|\leq 4\eps^{2/3}$ (note that $\eps < 1$). Condition \ref{condaeps} together with the monotonicity of $T$ finally give
\begin{equation} \label{tapositive}
T^{-1}(\alpha) \geq 0.
\end{equation}

Observe that since $T$ is a differentiable contraction, we have $T'(x)\leq 1$ for all $x$. By differentiating equation \eqref{defT} (as in equation \eqref{Tderiv}) we therefore get
\begin{equation}\label{ineq:deriv}
f(T(x))=\frac{1}{T'(x)} \frac{d}{dx} \gamma(x)\geq \frac{1}{\sqrt{2\pi}} e^{-x^2/2}, ~~ \forall x \in \RR.
\end{equation}

In light of this inequality, we learn  that it enough to show that $T^{-1}(\alpha)$ is bounded to establish a lower bound on $f(\alpha)$. Since we may replace $f(x)$ by $f(-x)$ without changing the statement of the lemma, we may assume without loss of generality that $T(0) \geq 0$. Define
$$
A = \{x>0; x - T(x) \geq \alpha+4\eps^{2/3} \} = [\beta, \infty)
$$
(if the set $A$ is the empty set, we agree that $\beta = \infty$). Now consider the case when $\beta > 2\alpha + 4\eps^{2/3}$. In this case
\begin{eqnarray*}
&& 2\alpha + 4\eps^{2/3}- T( 2\alpha + 4\eps^{2/3})\leq \alpha +4\eps^{2/3}\\
&\implies &T(2\alpha+4\eps^{2/3})\geq \alpha\\
&\implies& T^{-1}(\alpha)\leq  2\alpha+4\eps^{2/3}
\end{eqnarray*}

Together with \eqref{tapositive}, this gives $| T^{-1}(\alpha)| \leq  2\alpha+4\eps^{2/3}$. Therefore, by Inequality~\eqref{ineq:deriv} we finally get $f(\alpha)\geq \frac{1}{\sqrt{2\pi}}e^{-(3\alpha)^2/2}$.

Otherwise, we have that $\beta \leq 2\alpha + 4\eps^{2/3}$. But in this case we can write

\begin{eqnarray}
\eps &\geq& \int_{\RR} |x - T(x) +T(0)| |x + T(x) - T(0)|d\gamma\label{eqn:boundd1} \\
&\geq &\int_{x\geq \beta} |x - T(x) +T(0)| |x + T(x) - T(0)|d\gamma \\
&\geq & \int_{x\geq \beta} |\alpha+4\eps^{2/3} +T(0)| x d\gamma\label{eqn:boundd3}\\
&\geq &  \alpha\int_{x\geq \beta} xd\gamma \label{eqn:boundd4}\\
&\geq &  \alpha \frac{1}{\sqrt{2\pi}} e^{-\beta^2/2} \geq \frac{1}{\sqrt{2\pi}} e^{-(2\alpha + 4\eps^{2/3})^2/2}  \label{eqn:boundd5}
\end{eqnarray}
where Inequality~\eqref{eqn:boundd1} follows from Inequality~\eqref{eq:diff1}, Inequality~\eqref{eqn:boundd3} follows from the fact $T(x)\geq T(0)$ for each $x\geq 0$ and Inequality~\eqref{eqn:boundd4} follows from $|T(0)|\leq 4\eps^{2/3}$. Inequality~\eqref{eqn:boundd5} follows from simple estimates on Gaussian distribution. But note that the condition \eqref{condaeps} implies that $ \eps< \frac{1}{\sqrt{2\pi}} e^{-(2\alpha + 4\eps^{2/3})^2/2}$ which contradicts this inequality. The proof is complete.
\end{proof}

We are now ready to prove the main lemma
\begin{proof}[Proof of Lemma \ref{mainlem}]
Suppose that $\gamma_n(K) > e^{-\eta n}$ where $0 < \eta <1 $ is a constant determined later on, which will depend only on $\alpha$. Let $\{e_1,...,e_n\}$ be the standard basis of $\RR^n$. For all $1 \leq i \leq n$, define
$$
u_i = \left  \langle \int x d \gamma_K(x), e_i \right  \rangle
$$
and
$$
v_i =  \int \left(\langle x, e_i \rangle-u_i\right)^2 d \gamma_K(x)
$$
According to Lemma~\ref{cortalagrand}, we have
$$
\sum_{i=1}^n u_i^2 =   \sum_{i=1}^n \left(\left \langle \int x d \gamma_K(x), e_i \right  \rangle \right)^2=\left\| \int x d \gamma_K(x) \right\|_2^2 \leq  16\log\left ( \frac{2}{\gamma_n(K)} \right ) \leq 20\eta n
$$

for large enough $n$.
According to the second part of the same lemma, we have
$$
\sum_{i=1}^n v_i = \sum_{i=1}^n\int \left(\langle x, e_i \rangle-u_i\right)^2 d \gamma_K(x) = \int \|x\|^2 d \gamma_K(x) - \left\| \int x d \gamma_K(x) \right\|_2^2 \geq  n(1-10\sqrt{\eta}-20\eta)
$$

Lemma~\ref{lem:projection} implies that for each $1\leq i\leq n$,
$
 v_i \leq 1.
$
Let $I$ be uniformly chosen at random from $[n]$, then the above implies that
$$
\EE[u_I^2] \leq 20 \eta
$$
and
$$
\EE[v_I] \leq 1 - 10\sqrt{\eta}-20{\eta}
$$

Applying Markov's inequality, we have that
$$
\PP \left ( |u_I|^2 < 50\eta \mbox { and }  v_I >1-30\sqrt{\eta}-60{\eta}   \right ) > \frac14
$$
Thus there exists an $i$ which satisfies
\begin{equation} \label{goodcoordinate}
|u_i|^2 < 50\eta ~~ \mbox{ and } ~~ v_i> 1-30\sqrt{\eta}-60{\eta}.
\end{equation}
Let $f(x)$ denote the density of marginal of $\gamma_K$ on direction $e_i$. Given $\alpha$, we choose $\eps$ small enough to satisfy the condition of Lemma~\ref{logconcave}. By choosing $\eta$ to be small enough, so that equation \eqref{goodcoordinate} is satisfied (for example $\eta < \frac{\eps^2}{1000}$ suffices) we obtain that $\max\{f(\alpha), f(-\alpha)\}\geq \frac{1}{\sqrt{2\pi}}e^{-2\alpha^2}$. Thus setting $\tau =\frac{1}{\sqrt{2\pi}}e^{-(2\alpha+4\eps^{2/3})^2/2}$, we obtain that

$$\max\{\gamma_{n-1}(K\cap \{x_i=\alpha \}), \gamma_{n-1}(K\cap \{x_i=\alpha \}) \}\geq \max\{\gamma_{n}(K)f(\alpha),\gamma_n(K)f(-\alpha)\}\geq \tau \gamma_n(K)$$
as claimed.
\end{proof}

\subsection{The algorithm}\label{sec:algorithm}

In order to make the proof of Lemma \ref{mainlem} constructive, we would like to find a way of determining whether or not a coordinate $i \in [n]$ satisfies the condition \eqref{goodcoordinate}. Clearly, in order to do this, it is enough to have a good enough approximation for the covariance matrix of the Gaussian measure restricted to the body $K$. The estimation of this covariance matrix can be done using well-known sampling techniques, based on standard constructions of random walks in log-concave measures. We refer to the reader to \cite[2.2]{LV07} for the construction of two such walks, called the Ball-Walk and Hit-And-Run random walk.

Then, to get an estimate for the covariance matrix of $\gamma|_K$, we can directly apply the following result, which is an immediate consequence of Corollary 2.7 in \cite{LV07}:

\begin{theorem} (Lov\'asz-Vempala)
For any $n \in \mathbb{N}$, $\zeta > 0$ and $\delta > 0$ there exists a number $m= poly(n, 1 / \delta, \zeta)$ such that the following holds: Let $\mu$ be a log-concave probability measure whose density is $f:\RR^n \to \RR_+$ and let $v_1,...v_m$ be independent samples from the Ball-Walk of $m$ steps in $\mu$. Define for all $\theta \in \Sph$,
$$
\tilde E_\theta := \frac{1}{m} \sum_{i=1}^m \langle v_i, \theta \rangle
$$
and
$$
\tilde V_\theta := \frac{1}{m} \sum_{i=1}^m \langle v_i, \theta \rangle^2.
$$
Then with probability at least $1-\zeta$, we have for all $\theta$,
\begin{equation}
\left | \tilde E_\theta - \int \langle x, \theta \rangle d \mu(x) \right | < \delta
\end{equation}
and
\begin{equation}
\left | \tilde V_\theta - \int \langle x, \theta \rangle^2 d \mu(x) \right | < \delta.
\end{equation}
\end{theorem}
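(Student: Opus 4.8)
The plan is to reduce the uniform-over-$\theta$ assertion to estimating two objects — the mean vector and the second-moment matrix of $\mu$ — and then to invoke the rapidly-mixing sampler together with the log-concave concentration estimates of Lov\'asz and Vempala. Write $\bar x := \int x\, d\mu(x)$ and $M := \int x x^\top\, d\mu(x)$ for the true mean and second moment, and $\bar v := \tfrac1m \sum_{i=1}^m v_i$, $\tilde M := \tfrac1m \sum_{i=1}^m v_i v_i^\top$ for their empirical versions. Then for every $\theta \in \Sph$,
$$
\tilde E_\theta - \int \langle x,\theta\rangle\, d\mu(x) = \langle \bar v - \bar x,\, \theta\rangle, \qquad \tilde V_\theta - \int \langle x,\theta\rangle^2\, d\mu(x) = \theta^\top (\tilde M - M)\, \theta,
$$
so the first error is at most $\|\bar v - \bar x\|$ by Cauchy--Schwarz and the second is at most $\|\tilde M - M\|_{\mathrm{op}}$ by definition of the operator norm. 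Hence it suffices to show that, for $m = \mathrm{poly}(n, 1/\delta, 1/\zeta)$, one has $\|\bar v - \bar x\| < \delta$ and $\|\tilde M - M\|_{\mathrm{op}} < \delta$ simultaneously with probability at least $1-\zeta$; the stated bounds for all $\theta$ follow at once.

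For the sampling itself I would invoke the Ball-Walk (equivalently Hit-and-Run) analysis of \cite{LV07}: after a $\mathrm{poly}(n)$-time affine ``rounding'' preprocessing that brings $\mu$ to near-isotropic, well-rounded position, running the walk for $\mathrm{poly}(n)$ steps produces a point whose law lies within total-variation distance $\eps'$ of $\mu$, for any $\eps'>0$ at a cost polynomial in $n$ and $\log(1/\eps')$. Taking $\eps'$ inverse-polynomially small, the $m$ independently generated endpoints can be coupled, except on an event of probability $m\eps'$, to $m$ genuine i.i.d.\ draws from $\mu$; controlling the (exponentially small) tail contributions to the first and second moments, the empirical quantities $\bar v,\tilde M$ then agree with those of the i.i.d.\ draws up to an error that can be absorbed into $\delta$. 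This is exactly the content packaged by Corollary~2.7 of \cite{LV07}, and it reduces the claim to the purely statistical question: how many i.i.d.\ samples from an isotropic log-concave measure suffice to estimate $\bar x$ in $\ell_2$ and $M$ in operator norm to accuracy $\delta$ with confidence $1-\zeta$?

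That statistical step is standard via the sub-exponential concentration of log-concave marginals (Borell's lemma): for fixed $\theta$ the variable $\langle X,\theta\rangle$ is $O(1)$-subexponential, so Bernstein's inequality controls $\langle \bar v - \bar x,\theta\rangle$ and $\theta^\top(\tilde M - M)\theta$ for that $\theta$, and a union bound over a $\tfrac12$-net of $\Sph$ (of size $e^{O(n)}$) upgrades this to uniform control at the cost of an extra $O(n)$ in the exponent; altogether $m = \mathrm{poly}(n,1/\delta,\log(1/\zeta))$ suffices, and transporting accuracy back through the bounded-condition-number rounding map costs only a further polynomial factor. Choosing $m$ large enough to dominate both the mixing error of the previous paragraph and this empirical-concentration error proves the theorem. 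The main obstacle — the reason the statement is more than a one-line consequence of the law of large numbers — is the first half of the second paragraph: one must genuinely obtain near-stationary samples from $\mu$ in polynomial time, which needs the rounding step (for a warm start and a bounded condition number) and a careful propagation of the $\eps'$-mixing error, together with the heavy-tail truncation, into the moment estimates. Since \cite{LV07} carry out precisely this analysis, the theorem is an immediate consequence of their Corollary~2.7.
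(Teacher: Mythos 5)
The paper offers no proof of this statement --- it is imported verbatim as ``an immediate consequence of Corollary 2.7 in \cite{LV07}'' --- and your argument, after a correct reduction to estimating the mean vector in $\ell_2$ and the second-moment matrix in operator norm, bottoms out in exactly the same citation, so you are taking essentially the same route. The only point worth flagging is that achieving \emph{additive} accuracy $\delta$ for $\tilde V_\theta$ with polynomially many samples implicitly requires the covariance of $\mu$ to be bounded (which holds in the paper's application to $\gamma|_K$), so your remark about transporting accuracy back through the rounding map should be read with that normalization in mind.
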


Using this theorem, within polynomial time one can have a good enough approximation for the covariance matrix of $\gamma|_K$ such that with probability at least $1-\frac{1}{n^2}$, if a coordinate $i \in [n]$ satisfies the condition \eqref{goodcoordinate} with respect to the empirical covariance matrix of the random walk, it will also satisfy the same condition for the original measure, up to a negligible error.

The above gives us an algorithm for finding a coordinate $i \in [n]$ and a sign $\xi$ which satisfy the condition of Lemma \ref{mainlem} with probability $1-1/n^2$. In order to find the partial coloring, we reiterate by considering the new convex body $K \cap \{x_i = \alpha \xi \}$. Using a union bound, this algorithm will eventually succeed with probability at least $1-1/n$.

\section{Proof of Theorem~\ref{thm:banaszcyzk}}\label{sec:banas}

For a symmetric convex body $K \subset \RR^n$ with non-empty interior, we write
$$
\Vert x \Vert_K = \inf \{\lambda > 0; ~ x \in \lambda K \}
$$
to denote the corresponding norm. Let $\Gamma$ be a standard Gaussian random vector in $\RR^n$ and let $\Psi$ be a vector distributed according to the uniform measure on $\{-1,1\}^n$. Define
$$
\Phi(K) := \EE \left [ \Vert \Psi \Vert_K^2 \right ].
$$
Our proof will rely on the Maurey-Pisier estimate \cite{MP76}, which reads
\begin{theorem} (Maurey-Pisier)
For all symmetric $K \subset \RR^n$, we have
$$
\Phi(K) \leq \frac{\pi}{2} \EE[ \Vert \Gamma \Vert_K^2 ].
$$
\end{theorem}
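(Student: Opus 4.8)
The Maurey--Pisier inequality is the second-moment form of the classical comparison between Rademacher and Gaussian averages in a norm. I would prove it by the ``sign times magnitude'' decomposition of a Gaussian coordinate, combined with Jensen's inequality applied to the \emph{convex} function $x \mapsto \Vert x \Vert_K^2$.

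The plan is as follows. Let $R_1, \dots, R_n$ be i.i.d.\ half-normal random variables (each distributed as $|\Gamma_1|$) that are independent of the Rademacher vector $\Psi$, and recall $\EE R_i = \EE |\Gamma_1| = \sqrt{2/\pi}$. Since the coordinates of a standard Gaussian vector have i.i.d.\ signs, uniform on $\{-1,1\}$ and independent of the coordinate magnitudes, we have the distributional identity $\Gamma \stackrel{d}{=} \sum_i \Psi_i R_i e_i$, hence
$$
\EE \Vert \Gamma \Vert_K^2 = \EE \Bigl[ \bigl\Vert \sum_i \Psi_i R_i e_i \bigr\Vert_K^2 \Bigr].
$$

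Now condition on $\Psi$. Because the $\Psi_i$ are then constants and $\EE R_i = \sqrt{2/\pi}$, linearity of expectation gives
$$
\sqrt{\tfrac{2}{\pi}} \sum_i \Psi_i e_i = \sum_i \Psi_i \, (\EE R_i) \, e_i = \EE_R \Bigl[ \sum_i \Psi_i R_i e_i \Bigr],
$$
where $\EE_R$ is the expectation over $(R_i)_i$ only. The function $x \mapsto \Vert x \Vert_K^2$ is convex: it is the composition of the convex nondecreasing map $t \mapsto t^2$ on $[0,\infty)$ with the convex nonnegative function $\Vert \cdot \Vert_K$ (here I use that $K$ is a symmetric convex body, so $\Vert \cdot \Vert_K$ is a seminorm, finite because $0 \in \operatorname{int} K$; if $K$ has empty interior both sides of the claimed inequality are $+\infty$ and there is nothing to prove). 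Applying Jensen's inequality to the vector-valued random variable $\sum_i \Psi_i R_i e_i$ therefore yields, for each fixed value of $\Psi$,
$$
\Vert \Psi \Vert_K^2 = \tfrac{\pi}{2} \, \Bigl\Vert \EE_R \Bigl[ \sum_i \Psi_i R_i e_i \Bigr] \Bigr\Vert_K^2 \le \tfrac{\pi}{2} \, \EE_R \Bigl[ \bigl\Vert \sum_i \Psi_i R_i e_i \bigr\Vert_K^2 \Bigr].
$$
Taking the expectation over $\Psi$ and combining with the identity above for $\EE \Vert \Gamma \Vert_K^2$ gives
$$
\Phi(K) = \EE \Vert \Psi \Vert_K^2 \le \tfrac{\pi}{2} \, \EE \Bigl[ \bigl\Vert \sum_i \Psi_i R_i e_i \bigr\Vert_K^2 \Bigr] = \tfrac{\pi}{2} \, \EE \Vert \Gamma \Vert_K^2,
$$
which is the claim.

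There is no serious obstacle; the argument is short and classical. The two steps worth a word of care are: (i) the fact that a standard Gaussian vector has the same law as $\sum_i \Psi_i R_i e_i$, i.e.\ that one can peel off i.i.d.\ uniform signs from a Gaussian, which turns the Gaussian average into a Rademacher average of half-normally rescaled basis vectors; and (ii) checking that it is legitimate to run Jensen with $\Vert \cdot \Vert_K^2$ rather than $\Vert \cdot \Vert_K$ --- this is exactly where the constant $\pi/2$ comes from, as opposed to the first-moment constant $\sqrt{\pi/2}$, and it works because squaring a norm preserves convexity. The integrability needed to apply Jensen and to interchange the two expectations is immediate, since $\Vert \Gamma \Vert_K \le \Vert \Gamma \Vert_2 / r$ where $r > 0$ is the inradius of $K$ and $\EE \Vert \Gamma \Vert_2^2 = n < \infty$.
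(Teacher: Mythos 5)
Your proof is correct. The paper does not prove this statement at all --- it simply cites it as a classical result of Maurey and Pisier --- so there is nothing to compare against; your argument (decomposing $\Gamma$ as independent signs times half-normal magnitudes, using $\EE|\Gamma_1| = \sqrt{2/\pi}$, and applying Jensen's inequality conditionally on $\Psi$ to the convex function $x \mapsto \Vert x \Vert_K^2$, which is where the constant $\pi/2$ rather than $\sqrt{\pi/2}$ arises) is exactly the standard proof of this comparison, and the side remarks on integrability and on the degenerate case of empty interior are handled appropriately.
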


Note that, by definition
$$
\Phi(K) \leq \alpha^{2} \Rightarrow \PP[\Psi \in 2\alpha K]\geq \frac12.
$$
To complete the proof of Theorem~\ref{thm:banaszcyzk}, we therefore need to show that for a symmetric, convex $K \subset \RR^n$,
\begin{equation} \label{nts}
\gamma(K) \geq \delta \Rightarrow \EE \left [ \Vert \Gamma \Vert_K^2 \right  ] < C(\delta)
\end{equation}
for some constant $C(\delta)>0$ which only depends on $\delta$.

Next, note that there exists $c=c(\delta)>0$ such that $c B^n \subset K$, where $B^n$ is the unit ball. Indeed, let $c>0$ satisfy $ \frac{1}{\sqrt{2 \pi}} \int_{-c}^{c} e^{-x^2 / 2} dx = \delta$. Then for all $\theta \in \Sph$, one has $\gamma(\{x; |x \cdot \theta| < c \}) < \delta$, and therefore we must have $K \cap \{x; |x \cdot \theta| > c \} \neq \emptyset$. By the symmetry of $K$, it follows that $c B^n \subset K$.

It now follows that for all $y \in \mathbb{S}^n$ one has
$$
\Vert y \Vert_K \leq c^{-1}
$$
which implies, using the triangle inequality, that for all $x,y \in \RR^n$,
\begin{equation}
\Bigl | \Vert x \Vert_K - \Vert y \Vert_K \Bigr | \leq c^{-1} \|x-y\|_2.
\end{equation}
In other words, the function $\Vert \cdot \Vert_K$ is $c^{-1}$-Lipschitz.

For a function $f$ which is integrable with respect to $\gamma$, we define for all $0 < t < 1$
$$
P_{\gamma, f} (t) = \inf \left \{\alpha; ~ \PP(f(\Gamma) \leq \alpha) > t \right  \}.
$$
where $\Gamma$ is a standard Gaussian vector. In other words $P_{\gamma, f}(t)$ is the $t$-percentile of the variable $f(\Gamma)$.

The next theorem which is a well known estimate in Gaussian concentration, is an immediate corollary of Theorem \ref{Gaussconc}:
\begin{theorem}
For all $0 < t < 1$ there exists a constant $C=C(t)>0$ such that the following holds: let $f$ be an $L$-Lipschitz function, then for all $p \geq 1$,
\begin{equation}
\left | \EE( (f(\Gamma))^p ) - P_{\gamma,f} (t)^p \right | \leq C L^p.
\end{equation}
\end{theorem}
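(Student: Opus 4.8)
The plan is to derive everything from Gaussian concentration (Theorem~\ref{Gaussconc}), invoking it twice: once to pin the percentile $P_{\gamma,f}(t)$ to within $O_t(L)$ of the mean $\EE[f(\Gamma)]$, and once to control the central moments of $f(\Gamma)$. Write $\mu := \EE[f(\Gamma)]$ and $m := P_{\gamma,f}(t)$. First I would set $c_1(t) := \sqrt{2\log(2/\min(t,1-t))}$ and observe that Theorem~\ref{Gaussconc} gives $\PP(|f(\Gamma)-\mu| > c_1(t)L) < \min(t,1-t)$; hence $\PP(f(\Gamma)\le \mu + c_1(t)L) > 1-\min(t,1-t)\ge t$, which by the definition of $P_{\gamma,f}$ as the infimum of the levels $\alpha$ with $\PP(f(\Gamma)\le\alpha)>t$ forces $m\le \mu + c_1(t)L$, and symmetrically $\PP(f(\Gamma)\le\alpha)\le\min(t,1-t)\le t$ for every $\alpha < \mu - c_1(t)L$ forces $m\ge \mu - c_1(t)L$. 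So $|m-\mu|\le c_1(t)L$; the only subtlety is keeping track of $<$ versus $\le$ so that the $\inf$ in the definition of the percentile behaves.

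Next I would bound, for each integer $k$ up to $\lceil p\rceil$, the central moment $\EE[|f(\Gamma)-\mu|^k]$ by integrating the subgaussian tail from Theorem~\ref{Gaussconc}:
$$
\EE\bigl[|f(\Gamma)-\mu|^k\bigr] = \int_0^\infty k r^{k-1}\,\PP\bigl(|f(\Gamma)-\mu|>r\bigr)\,dr \le \int_0^\infty 2k\,r^{k-1} e^{-r^2/(2L^2)}\,dr = c_2(k)\,L^k
$$
with $c_2(k)$ an absolute constant. Combining this with the first step and the elementary inequality $(a+b)^k\le 2^{k-1}(a^k+b^k)$ yields $\EE[|f(\Gamma)-m|^k]\le c_3(t,k)L^k$.

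Then I would write $f(\Gamma)=m+Y$ with $Y:=f(\Gamma)-m$. For integer $p$ the binomial theorem gives $\EE[f(\Gamma)^p]-m^p=\sum_{k=1}^p\binom pk m^{p-k}\EE[Y^k]$, so the quantity to be bounded is at most $\sum_{k=1}^p\binom pk |m|^{p-k} c_3(t,k)L^k$; for non-integer $p$ I would instead use $|a^p-b^p|\le p\max(a,b)^{p-1}|a-b|$ for $a,b\ge 0$ together with Hölder's inequality in place of the expansion. Finally, $|m|\le|\mu|+c_1(t)L$ by the first step, and in every setting where this estimate is applied the percentile $m$ — hence $\mu$, hence $|m|$ — is itself $O_t(L)$ (for instance when $f=\|\cdot\|_K$ with $\gamma(K)\ge\delta$ one has $P_{\gamma,f}(t)\le 1$ for any $t<\delta$ while $L$ is a fixed constant that may be taken $\ge 1$); substituting $|m|\le C_0(t)L$ collapses the sum to $C(t)L^p$ with $C(t)=\sum_{k=1}^p\binom pk C_0(t)^{p-k}c_3(t,k)$.

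I do not expect a genuine obstacle: the whole argument is Gaussian concentration plus tail integration plus a binomial expansion. The two points that need care are the bookkeeping around the $\inf$ in the definition of $P_{\gamma,f}$ (handled by working with strict inequalities in Step~1) and the passage to non-integer exponents $p$ (handled by the convexity/Hölder bound above rather than the binomial theorem); if anything, the "hard part" is merely noticing that the clean bound $CL^p$ implicitly relies on the percentile being comparable to $L$, which is exactly what holds in the applications of this estimate.
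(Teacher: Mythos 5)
The paper offers no proof of this estimate, labelling it merely as ``an immediate corollary'' of Theorem~\ref{Gaussconc}; your argument --- one application of Gaussian concentration to pin the percentile $m=P_{\gamma,f}(t)$ to within $O_t(L)$ of the mean, tail integration for the central moments, then a binomial (or H\"older, for non-integer $p$) expansion --- is exactly the intended derivation, and the bookkeeping in each step checks out.

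You are also right to flag that the stated bound $C L^p$ secretly requires $|m|=O_t(L)$: the cross-terms $\binom{p}{k}|m|^{p-k}\,\EE[|Y|^k]$ in your expansion cannot otherwise be absorbed into $L^p$, and a one-line counterexample such as $f(x)=\max(x_1,0)+A$ with $A\to\infty$ (still $1$-Lipschitz, $p=2$, $t=\tfrac12$) shows the literal statement is false for general Lipschitz $f$. One further imprecision you did not flag: even granting $|m|\lesssim_t L$, the constant $C$ produced by your proof --- and any correct constant --- must depend on $p$ as well as on $t$, since $c_2(k)$ and $c_3(t,k)$ grow with $k$; this is unavoidable, as for instance $f(x)=|x_1|$, $t=\tfrac12$ gives $\EE[|\Gamma_1|^p]\to\infty$ while $P_{\gamma,f}(t)^p\to 0$ as $p\to\infty$, so the theorem's quantifier ``there exists $C=C(t)$ \ldots for all $p\geq 1$'' is slightly too strong. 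Neither issue affects the paper, which invokes the estimate only at $p=2$ with $P_{\gamma,\|\cdot\|_K}(\delta)\le 1$ and $L=c^{-1}$ bounded below by a constant depending on $\delta$ --- precisely the regime in which your proof closes.
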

\noindent This theorem implies that
$$
\EE[ \Vert \Gamma \Vert_K^2 ] \leq P_{\gamma,\Vert \cdot \Vert_K}(\delta)^2 + C(\delta) c^{-2},
$$
but note that we actually have $\gamma(K) = P_{\gamma,\Vert \cdot \Vert_K} (\delta) = 1$. This implies \eqref{nts} and the proof is complete.

%

\bibliographystyle{plain}
\bibliography{const-disc}

\end{document}